\newtheorem{prop}{Proposition}
\newtheorem{lemm}{Lemma}
\long\def\@makecaption#1#2{\ifx\@captype\@IEEEtablestring%
\footnotesize\begin{center}{\normalfont\footnotesize #1}\\
{\normalfont\footnotesize\scshape #2}\end{center}%
\@IEEEtablecaptionsepspace
\else
\@IEEEfigurecaptionsepspace
\setbox\@tempboxa\hbox{\normalfont\footnotesize {#1.}~~ #2}%
\ifdim \wd\@tempboxa >\hsize%
\setbox\@tempboxa\hbox{\normalfont\footnotesize {#1.}~~ }%
\parbox[t]{\hsize}{\normalfont\footnotesize \noindent\unhbox\@tempboxa#2}%
\else
\hbox to\hsize{\normalfont\footnotesize\hfil\box\@tempboxa\hfil}\fi\fi}
\begin{document}

\title{LoRa beyond ALOHA: An Investigation of Alternative Random Access Protocols}

\author{Luca~Beltramelli,
        Aamir~Mahmood,
				Patrik~{\"O}sterberg,
				and~Mikael Gidlund% <-this % stops a space
\thanks{The authors are with the Department of Information Systems and Technology, Mid Sweden University, Holmgatan 10, 850~71~Sundsvall, Sweden. Corresponding author: Luca Beltramelli, email: luca.beltramelli@miun.se}% <-this % stops a space
%\thanks{Manuscript received April 19, 2005; revised August 26, 2015.}
\vspace{-20pt}}

\maketitle

\begin{abstract}

We present a stochastic geometry-based model to investigate alternative medium access choices for LoRaWAN---a widely adopted low-power wide-area networking (LPWAN) technology for the Internet-of-things (IoT). LoRaWAN adoption is driven by its simplified network architecture, air interface, and medium access. The physical layer, known as LoRa, provides quasi-orthogonal virtual channels through spreading factors (SFs) and time-power capture gains. However, the adopted pure ALOHA access mechanism suffers, in terms of scalability, under the same-channel same-SF transmissions from a large number of devices. In this paper, our objective is to explore access mechanisms beyond-ALOHA for LoRaWAN. Using recent results on time- and power-capture effects of LoRa, we develop a unified model for the comparative study of other choices, i.e., slotted ALOHA and carrier-sense multiple access (CSMA). The model includes the necessary design parameters of these access mechanisms, such as guard time and synchronization accuracy for slotted ALOHA, carrier sensing threshold for CSMA. It also accounts for the spatial interaction of devices in annular shaped regions, characteristic of LoRa, for CSMA. The performance derived from the model in terms of coverage probability, channel throughput, and energy efficiency are validated using Monte-Carlo simulations. Our analysis shows that slotted ALOHA indeed has higher reliability than pure ALOHA but at the cost of lower energy efficiency for low device densities. Whereas, CSMA outperforms slotted ALOHA at smaller SFs in terms of reliability and energy efficiency, with its performance degrading to pure ALOHA at higher SFs.
\end{abstract}
\iffalse
\renewcommand\IEEEkeywordsname{Index Terms}
\begin{IEEEkeywords}
LoRa, Stochastic Geometry, Interference Modelling, Throughput Model, Random Access. 
\end{IEEEkeywords}
\fi
\IEEEpeerreviewmaketitle

\section{Introduction}
\label{section_intro}

\IEEEPARstart{L}{}ow-power wide-area network (LPWAN) technologies are rapidly being adopted in delay-tolerant industrial applications for wide-area sensing, monitoring, and supervisory control with the emphasis on energy-efficient support for a massive number of devices over long distances~\cite{Candell}.
{The energy efficiency of
LPWANs make them a cost-effective candidate solution for non-critical industrial applications~\cite{sanchez2016state}. The robust modulations that allow LPWANs to achieve wide-area coverage can be useful in industrial environments where the wireless channel is often affected by multipath and fading~\cite{luvisotto2018use}. Moreover, the extended coverage reduces the need for multi-hop communication while supporting mobility.} Among others, the most critical challenge that LPWANs face is the interference caused by simultaneous transmissions from a large number of connected devices. To manage and combat interference, the selection of effective and yet energy efficient multiple access mechanisms is paramount. It is, hence, essential to explore all potential medium access design choices available for LPWANs to best support all use cases. This work investigates possible medium access solutions for LoRa, an LPWAN technology that has recently become popular both in industry and academia.

 LoRa, as a communication technology, is the union of a proprietary modulation scheme at the physical (PHY) layer with LoRaWAN, the latter being an open standard defining the medium access and other higher layers of the communication stack \cite{centenaro2016long}. According to LoRaWAN specifications, most devices access the channel using a mechanism similar to pure ALOHA. ALOHA is a well-understood protocol known for its simplicity, ease of implementation, and for adding a minimal communication overhead. The random access nature of ALOHA perfectly matches to the sporadic data transmissions of typical monitoring applications. However, ALOHA is known to suffer from limited scalability (i.e. the maximum number of devices that can be served) and the lack of any QoS guarantees. Numerous studies \cite{georgiou2017low, mahmood2019scalability, liando2019known, croce2018impact, rahmadhani2018lorawan, haxhibeqiri2017lora}, investigating the performance of LoRa, have highlighted its qualities but also its limits, some, in particular, attributable to the ALOHA protocol. Many works in literature address this latter aspect by proposing alternative medium access mechanisms in LoRaWAN \cite{polonelli2019slotted,pham2018robust,to2018simulation,leonardi2018industrial,piyare2018demand,haxhibeqiri2018low,zorbas2020ts,rizzi2017using}. The two main research directions of these studies are,  a)  introducing QoS support in LoRa via the use of scheduling mechanisms for deterministic data transmissions in time-critical industrial use-cases and b) enhancing LoRa scalability via the use of random access mechanisms such as slotted ALOHA and CSMA.  
%  \textcolor{blue}{The first research direction has the goal of expanding the use of LoRa use to more time-critical industrial use-cases.}
This work however follows the second direction, as improving the reliability (i.e. transmission success probability) and scalability of LoRa without compromising its energy efficiency would not only benefit the already supported industrial applications but also enable the use of LoRa in new applications (e.g. sensor backhaul, machine health monitoring)~\cite{Candell}.

To the best of the authors’ knowledge, the performance of alternative random access schemes for LoRa has not been investigated nor compared analytically in the literature. Existing works (e.g., \cite{georgiou2017low}, \cite{mahmood2019scalability}) proposing analytical models for LoRa have exclusively analyzed the behavior of pure ALOHA. Seminal works in stochastic geometry \cite{haenggi2009interference,baccelli2010stochastic} have compared the performance of different access mechanisms, but their models  do not capture the PHY layer characteristics of LoRa. In this paper, we develop an analytical model for performance analysis of LoRa networks under three basic random access mechanisms: pure ALOHA (\mbox{P-ALOHA}), slotted ALOHA (\mbox{S-ALOHA}) and non-persistent carrier-sense multiple access (\mbox{NP-CSMA}). The objective is to analyze how these access mechanisms can improve the performance of a LoRa network, especially channel throughput and energy efficiency while taking into account the LoRa modulation and interference specific details. The proposed model can be a helpful tool in the design and evaluation of random access mechanisms for LoRa networks.  
In comparison with the existing works proposing analytical models for the performance analysis of LoRa, the novel contributions of this paper are as follows:\begin{itemize}
 \item A unified analytical model for investigating and comparing the performances of three random access mechanisms \mbox{P-ALOHA}, \mbox{S-ALOHA}, and \mbox{NP-CSMA} in conjunction with LoRa chirp spread spectrum (CSS) modulation. Based on the latest reported results in the literature, the model incorporates the time- and power-capture effects and interference behavior of LoRa CSS technology. 
 \item Characterization of the interference intensity in terms of the main parameters of the studied random access mechanisms,  qualifies the model for the design of media access control in LoRa: the guard time in \mbox{S-ALOHA} or the sensing threshold in \mbox{CSMA}. 
\item A comparison of the studied access mechanisms in terms of success probability, channel throughput, and energy efficiency. The resulting analysis can be used to explore the best random access mechanism within the LoRa parameter space (Spreading factor, number of devices, deployment area).
\end{itemize}

The rest of this paper is organized as follows. Sec.~\ref{sec:bckg} gives background information on LoRa and LoRaWAN and discusses the related works on random access mechanisms for LoRa. Sec.~\ref{se:Mat_Model} describes the proposed analytical approach to model different access schemes in LoRa.  The analytical and simulation results are presented in Sec.~\ref{sec:Results_Analysis}. Finally, Sec.~\ref{sec:conclusions} concludes this paper.

\section{Background and Related Works}
\label{sec:bckg}
This section introduces LoRa and presents the related works that have proposed alternatives to the default channel access used by LoRaWAN.
\subsection{LoRa and LoRaWAN}
\label{sec:LoRa}
LoRa is a chirp spread spectrum (CSS)-based proprietary modulation and coding scheme developed by Semtech for sub-1 GHz ISM bands. LoRa supports six quasi-orthogonal spreading factors (SF), where the use of a larger SF increases the coverage range (by lowering the receiver sensitivity) at the cost of datarate \cite{centenaro2016long}. The LoRaWAN open standard  specifies the network entities and their roles, and how devices access the shared channel. In a LoRaWAN network, the two main entities related to the air interface are gateways and end devices (EDs).  LoRaWAN specifies three classes of EDs based on their downlink response time and energy consumption. Class A devices offer the best energy-saving performance by waking up only when they have data to transmit using ALOHA. Class B devices wake up at periodic intervals to synchronize and exchange data with the gateway, which makes them a candidate for S-ALOHA based access. Class C devices are always active, continuously listening to the channel for transmissions from the gateway.

\subsection{Alternative channel access mechanisms for LoRa.}
Several works investigating the use of LoRa in industrial scenarios for non time-critical monitoring applications have shown promising results \cite{haxhibeqiri2017lora2, luvisotto2018use}.
% \textcolor{red}{The many proposals emerged in the recent past to replace \textit{default}  pure ALOHA in LoRaWAN with other mechanisms, are attributable to two main factors.} 
Concurrently, new proposals have emerged in recent years to replace \textit{default}  pure ALOHA in LoRaWAN with alternative access mechanisms more suited for industrial applications. The proposals are in general motivated with one of two goals. The first is to improve scalability and communication reliability, which in turn increases energy efficiency.
%by reducing the number of retransmissions.
% \textcolor{red}{ While the acknowledgments (ACKs) are seldom used in LoRa, the reliability can be enhanced by time and frequency diversity.}  
Any access mechanism that reduces interference in a LoRa network compared to P-ALOHA, can improve the capacity of a gateway to serve more devices. This is especially important for large-scale indoor and outdoor monitoring applications in industrial environments.  To this end, the access techniques proposed in the literature are variants of S-ALOHA, CSMA, or other random access solutions offering better channel utilization than P-ALOHA.
The second motivating factor in upgrading LoRaWAN channel access is to support use cases with relaxed but bounded QoS requirements such as industrial process control applications. This has led to resource reservation-based access solutions based on the allocation of frequency, time, and SF.  Below, we summarize the LoRa-specific channel access enhancements proposed in the literature.

\subsubsection{\mbox{S-ALOHA}}
In theory, by replacing pure with a slotted variant of ALOHA, the channel capacity is doubled by virtue of interference reduction in the network. However, providing energy-efficient synchronization of EDs over the wide coverage of LoRa can be challenging. An S-ALOHA protocol on top of LoRaWAN stack is proposed and implemented in \cite{polonelli2019slotted}.

\subsubsection{CSMA}
In CSMA/CA, EDs sense the channel before attempting a transmission, which reduces the interference in a densely deployed network. Unlike \mbox{S-ALOHA}, an asynchronous CSMA does not require synchronization of EDs. However, its performance is negatively affected by the presence of hidden nodes. In extreme situations where all nodes are hidden, CSMA throughput degrades to \mbox{P-ALOHA}. Since LoRa coverage is several kilometers, a significant percentage of EDs in the service area can be hidden from each other. For sensing, LoRa chipsets support a channel activity detection (CAD) mechanism, designed to detect the presence of LoRa preamble or data symbols on the channel. The CAD mechanism for enabling CSMA in LoRa was experimentally evaluated in \cite{pham2018robust, liando2019known}, while an ns-3 module to simulate p-CSMA in a LoRa network is presented in \cite{to2018simulation}.

\subsubsection{Scheduled MAC}
The use of scheduled MAC in LoRa have been suggested for timely communication of a limited number of devices in industrial applications.
In a scheduled MAC, the channel resource are divided among the EDs to allow for a contention-free transmission. Although attractive for reducing interference, it brings new challenges of transmitting the scheduling information to EDs and predict their resource requirement.  For LoRa, the scheduled MAC techniques proposed in the literature include: hybrid ALOHA/TDMA access for periodic real-time traffic~\cite{leonardi2018industrial}, on-demand TDMA access using wake-up radios~\cite{piyare2018demand}, algorithms to assign timeslots according to EDs' traffic periodicity~\cite{haxhibeqiri2018low}, self-organizing time-slotted communication~\cite{zorbas2020ts}, and time-slotted channel hopping~\cite{rizzi2017using}.

\begin{figure*}[!ht]
     \subfloat[][\centering co-SF interference, $I_1$ $=\text{SF}_{10}$,
     
     $I_2$ $=\text{SF}_{10}$, offset of $0$ symbol time, $\textrm{SIR}=5$ dB \label{subfig-1:interference1}]{%
       \includegraphics[width=0.24\textwidth]{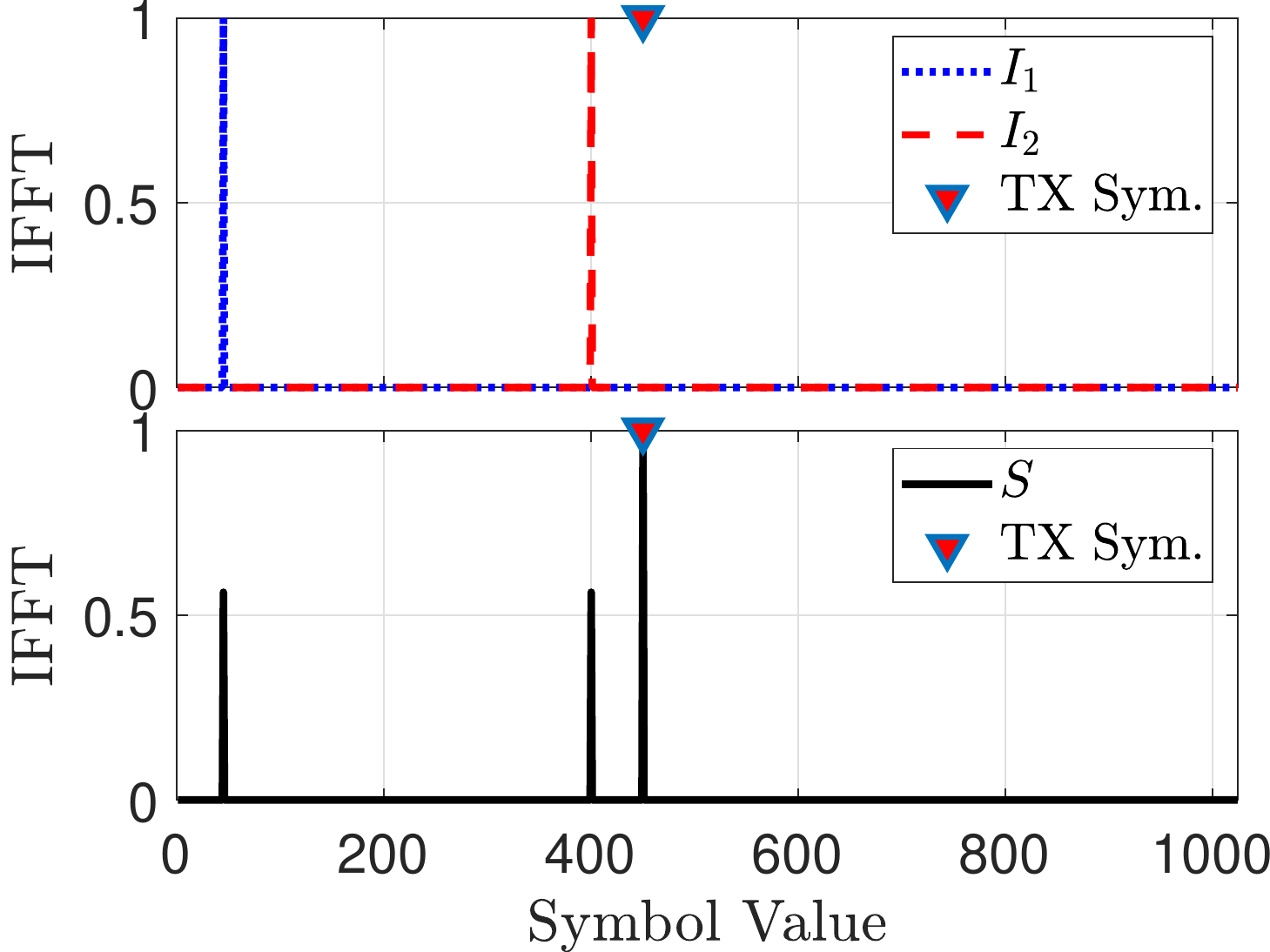}
     }
     \hfill
     \subfloat[][\centering co-SF interference, $I_1$ $=\text{SF}_{10}$,
     
     $I_2$ $=\text{SF}_{10}$, offset of $0.25$ symbol time, $\textrm{SIR}=0$ dB \label{subfig-2:interference2}]{%
       \includegraphics[trim=28 0 0 0,clip,width=0.2235\textwidth]{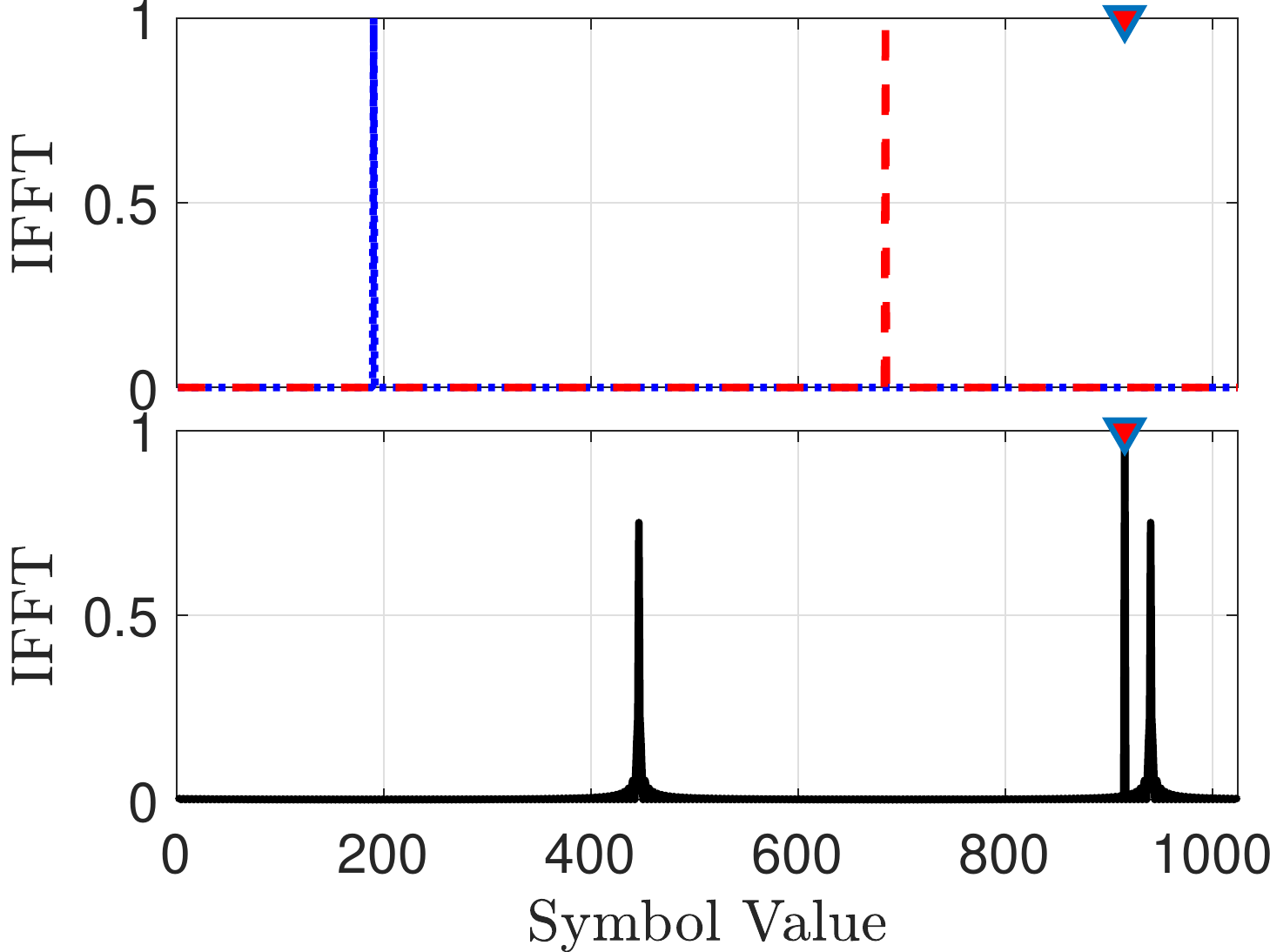}
     }
     \hfill
     \subfloat[][\centering inter-SF interference, $I_1$ $=\text{SF}_{9}$,
     
     $I_2$ $=\text{SF}_{11}$, offset of $0$ symbol time, $\textrm{SIR}=-15$ dB  \label{subfig-3:interference3}]{%
       \includegraphics[trim=28 0 0 0,clip,width=0.2235\textwidth]{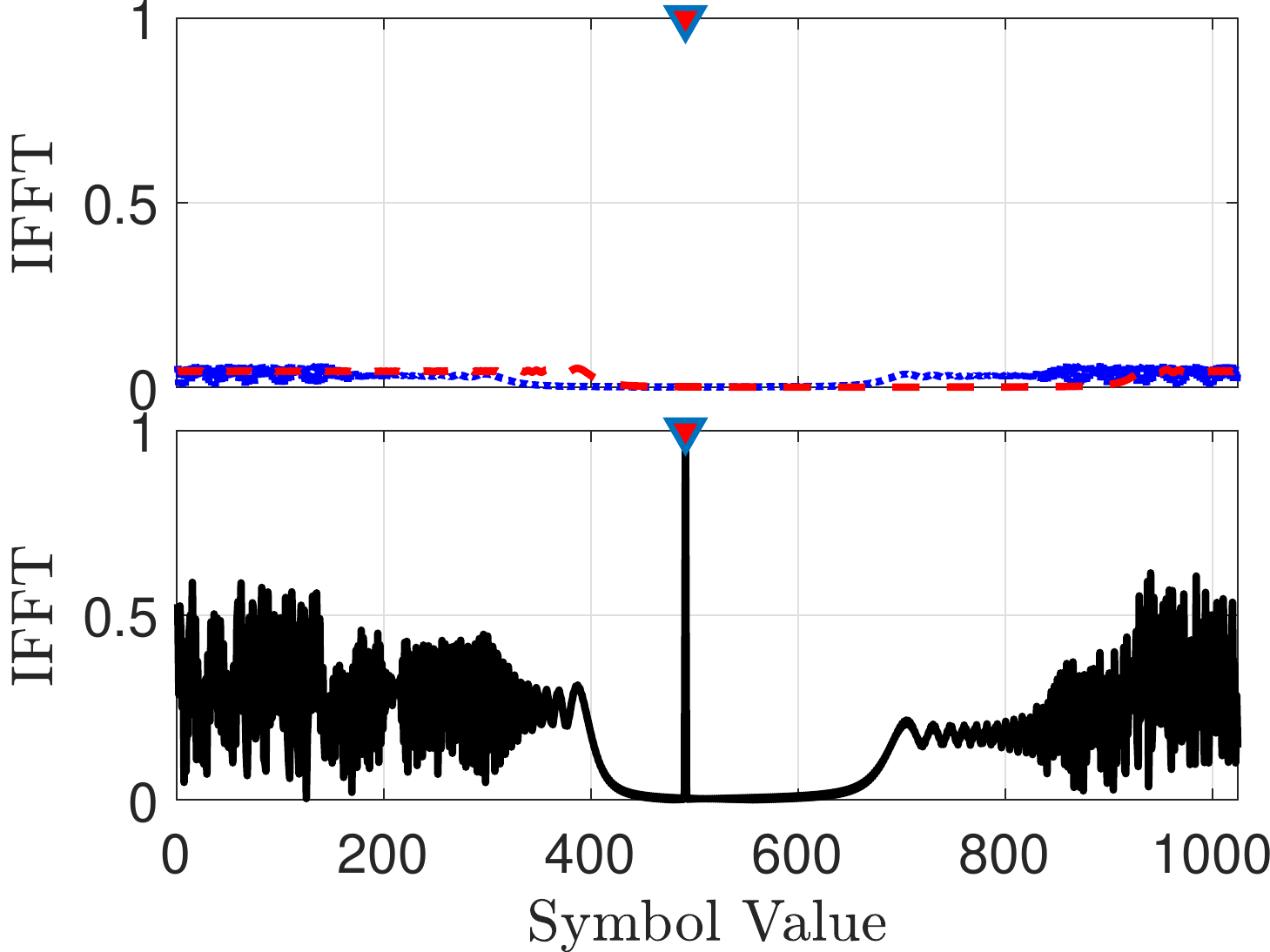}
     }
     \hfill
     \subfloat[][\label{subfig-4:Prsame_sym}]{%
       \includegraphics[width=0.24\textwidth]{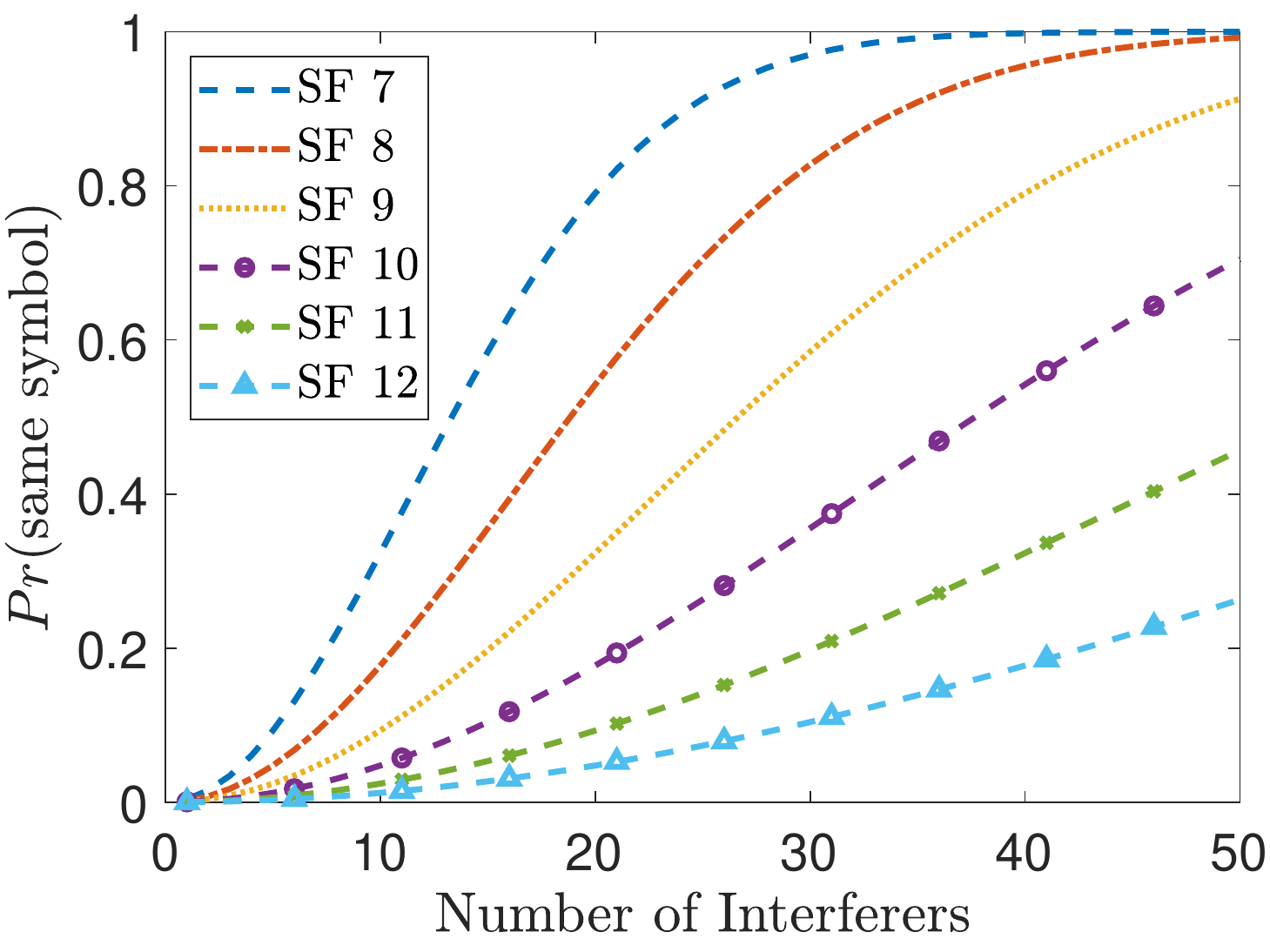}
     }
     \caption{(a-c) IFFT of a signal (S) transmitted using SF~$10$ and affected by two interferers ($I_1$ and $I_2$); (d) Probability that two or more interferers transmitting the same symbol.}
     \label{fig:interference}
     \vspace{-15pt}
\end{figure*}

\section{Mathematical Modeling}
\label{se:Mat_Model}

We consider a LoRa network consisting of a single gateway to provide connectivity to EDs. The EDs are distributed uniformly around the gateway in a service area of radius $R$, according to a homogeneous Poisson Point Process (PPP) $\phi_\text{ED}$ of intensity $\lambda$. The average number of EDs in the area is $\bar{N}=\lambda \pi  R^2$, and the distance distribution of EDs from the gateway is $f_\text{ED}(r)={2r}/{R^2}, 0\leq r\leq R$.  We assume all EDs use the same channel of bandwidth $B$, while each ED selects an SF from the set $\textrm{SF}\in\{7, 8, \cdots, 12\}$. The area is divided into six contiguous annuli, each populated by EDs using the same SF. We consider only unacknowledged uplink traffic (with fixed message size) from all EDs to the gateway, which is prevalent in most LoRa use-cases. 
LoRa devices, if not using listen before talk (LBT), follow regional regulations on maximum duty-cycle  $\alpha$, as a coexistence mechanism in the ISM band. For a fair comparison, we enforce EDs to comply with the same duty-cycle limitations irrespective of their medium access mechanism. We assume there is no restriction on the maximum number of frames that can be simultaneously demodulated at the gateway. In reality, a restriction is imposed by  the number of demodulating paths at the gateway. To account for this hardware limitation, the proposed model could easily be modified as in \cite{sorensen2019analysis}.
In the following section, before developing the proposed model, we establish a fresh case of using dominant co-SF interference based on recent results in the literature.  Following that, in Sec.~\ref{sec:IPPP}, we identify a distinctive parameter (intensity of interferences) in analyzing the performance of different access mechanisms, and analytically derive it for a LoRa network for \mbox{P-ALOHA}, \mbox{S-ALOHA} and \mbox{NP-CSMA}. Later, the success probability conditions are developed in Sec.~\ref{sec:SNR&SIR}, which are then used to find coverage probability, channel throughput, and energy efficiency in Sec.~\ref{sec:perf_metrics}.
\subsection{Interference model}
\label{sec:int_model}
The role of a multiple access mechanism is to dictate how devices access the shared transmission medium. Consequently, the use of different multiple access mechanisms directly leads to different intensities and distribution of interference. 
To correctly predict the performance of different medium access mechanisms, we first discuss the effect of interference in LoRa.

In LoRa CSS modulation,  each symbol is spreaded over a sinusoidal signal with a linearly increasing frequency called an up-chirp. A symbol is represented using a shift of the linear frequency increase.
At the receiver, the signal is multiplied by a second sinusoid with linearly decreasing frequency called down-chirp. The resulting signal produces a sharp peak in the frequency domain corresponding to the symbol value encoded in the chirp.
We analyzed the possible interference scenarios using the simulation tool developed in \cite{croce2018impact}. We identified three notable cases: a) fully overlapping co-SF interference; b) partial overlapping co-SF interference; c) inter-SF interference.
We considered two interfering signals ($I_1$ and $I_2$), and we changed the SF and offset between the useful signal and the interfering signals. 

The results showed in Fig.~\ref{fig:interference} are obtained by considering an FFT-based demodulation, perfectly synchronized to the received signal.
Fig.~\ref{subfig-1:interference1} shows that, if the symbols encoded by the useful and interfering signals are different, the IFFT of fully overlapping co-SF interferers contains a peak at each encoded symbol. The peak value of IFFT depends on the relative received signal strength (RSS) of the signals. Consequently, unless $I_1$ and $I_2$ transmit the same symbols, the receiver can decode the transmitted symbol correctly given that its RSS is higher than the interferers. The signaling alphabet of LoRa has cardinality  $2^\text{SF}$. The probability that from a set of $n$ interfering signals with the same SF, two or more  transmit the same symbol is 
\begin{equation}
 \Pr(\text{Same symbol})\approx 1-e^{-\frac{n^2}{2^\text{SF+1}}}\text{.}
 \label{eq:Pr_samesymbol}
\end{equation}
Fig.~\ref{subfig-4:Prsame_sym} shows \eqref{eq:Pr_samesymbol} for different SFs and number of interferers. Fig.~\ref{subfig-2:interference2} illustrates the case of partially overlapping co-SF interference with the transmitted symbol. The peaks in the IFFT caused by the interfering signals are lower and wider for even signal-to-interference ratio ($\textrm{SIR}$)$ = 0$ dB. Fig.~\ref{subfig-3:interference3} shows the inter-SF interference scenario, where the signal is transmitted using SF~10, and the interferers are using SF~9 and SF~11. %From the results it is possible to appreciate the quasi-orthogonality of different SF. 
A large number of interferers with high signal strength are required before the receiver fails to detect the peak of the transmitted symbol.

The ability to receive partially overlapping LoRa transmissions by a LoRa receiver is evaluated in~\cite{rahmadhani2018lorawan} and~\cite{haxhibeqiri2017lora}. These studies show that a) LoRa does not offer any time capture, unlike other spread spectrum techniques, and b) the message can be correctly decoded even if the first few preamble symbols are overlapped but at least 5 symbols are left for the digital phase lock loop (PLL) to lock the reception.
A LoRa preamble consists of a fixed part (i.e., synchronization word of 2 symbols, and an additional 2.25 symbols) and a configurable part that varies from 6 to 65535 symbols. In summary, we make the following useful observations:
\begin{itemize}
    \item Inter-SF interference can be neglected if the SIR is not too low.
    \item Unless the smallest SFs are used by an extremely large number of EDs, in studying the co-SF interferers only the dominant can be considered. 
    \item LoRa can survive collisions affecting the first few symbols in the preamble even if the SIR is lower than the minimum required for the power capture to take place.
\end{itemize}
Based on these observations, we assume perfect orthogonality among SFs and consider the dominant co-SF interference only in the subsequent analytical modeling. Arguably, the assumption becomes weak for a very large number of interferers. However, as the \mbox{S-ALOHA} and CSMA reduce the number of potential interferers (see Sec.~\ref{sec:IPPP}), it is more plausible to consider dominant than cumulative interference.

\subsection{MAC-dependent Intensity of Point Process of Interferers}
\label{sec:IPPP}
To characterize interference, be it cumulative or dominant, a measure of interest is the set of all interfering devices to a reference transmission. This interference set, a subset of the original PPP, is 
shaped according to the traffic intensity and the medium access mechanism 
used by the network. Motivated by the discussion in the previous section, our interest is to characterize the dominant interference $\mathcal{I}^*$ at the 
gateway, i.e., 
\begin{equation}
\label{eq:Iq}
\displaystyle
	\mathcal{I}^*=\max\limits_{i\in  \phi_\text{I}}\left(I_i\right) = p_\text{tx
}\max\limits_{i\in  \phi_\text{I}}\left\{g_i l\left(r_i\right)\right\}\text{,}
\end{equation}
where $\phi_I$ is the PP of interferers (interference geometry), while $g_i$ 
is the fading coefficient and $l\left(r_i\right)$ the path loss function of 
the $i$-th interferer at distance $r$ from the gateway.

To evaluate the outage or success probability, the distribution of interference must be known \cite{haenggi2009interference}. The interference distribution for 
$\mathcal{I}^*$, $F_{\mathcal{I}^*}(x)$, can be determined from extreme order statistics. That is, if 
$F_I(x)$ is the distribution of $I_i=g_i l(r_i)$, then $F_{\mathcal{I}^*}(x)$ for interferers of random size is
\begin{equation}
\label{eq:F_I}
F_{\mathcal{I}^*}(x) = \sum_n \Pr\left(N=n\right)\left[F_{X_{i}}(x)\right]^n\text{,}
\end{equation}
where $\Pr\left(N=n\right)$ is the PMF of the arrival process of 
interfering transmissions. For Poisson arrivals with parameter $\mu$, 
$\Pr\left(N=n\right) = \mu^k e^{-u}/k!$, while $\mu$ for an annulus of radii $r_1$ and $r_2$ is defined as
\begin{equation}
	\mu=\mathcal{V} \pi  \left(r_2^2-r_1^2\right)\text{.}
	\label{eq:mu}
\end{equation}

In \eqref{eq:mu}, $\mathcal{V}$ is the intensity of PP of interferers $\phi_I$, 
defined by the channel access rules. Therefore, the intensity $\mathcal{V}$ is the 
parameter that changes the interference distribution in \eqref{eq:F_I}, and 
eventually, the system performance with a change in MAC. In what follows, we find the intensity $\mathcal{V}_x$ of the PP of interferers $\phi_I$ for $x\in\{\textrm{P-ALOHA}, \textrm{S-ALOHA}, \textrm{CSMA}\}$. The PP $\phi_I$ and its intensity $\mathcal{V}$, in essence, is obtained by applying channel access-dependent thinning of the EDs' PPP $\phi_\text{ED}$.

\subsubsection{P-ALOHA}
\label{sec:aloha}
P-ALOHA is a natural match for the PPP since it maintains the distributional properties of the PPP. As a result, if the EDs
form a PPP of intensity $\lambda$ and transmit independently with probability
$\alpha$, the active set of transmitters, by independent thinning property of PPP~\cite{haenggi2009interference}, forms a PPP of intensity $\alpha \lambda$.
In \mbox{P-ALOHA}, an ED with a message can immediately transmit and the vulnerability time of the transmission is two times the message time-on-air (ToA). Therefore, the intensity of the interferers PP is 
\begin{equation}
\displaystyle
\label{eq:v_aloha}
	\mathcal{V}_{\textrm{P-ALOHA}}=2 \alpha\lambda.
\end{equation}

\subsubsection{LoRA}
\label{sec:LoRa_aloha}
LoRa access method is \mbox{P-ALOHA} but as analyzed in Sec.~\ref{sec:int_model}, LoRa is immune to interference affecting the first few symbols in a message preamble. This effect leads to reduction in the vulnerability time of a message by $(T_p-5T_\text{sym})/(\textrm{ToA})$, where $T_p$ is the preamble duration and $T_\text{sym}=2^\text{SF}/B$ is the symbol time. While this effect is negligible for large messages, its contribution becomes visible at small payloads (i.e., a few Bytes). The intensity of the PP of the interferers for LoRa is 
\begin{equation}
\displaystyle
\label{eq:v_lora}
	\mathcal{V}_{\textrm{LoRa}}=\left[\left(2-\frac{T_p-5T_\text{sym}}{\text{ToA}}\right) \cdot \alpha\cdot \lambda\right]<\mathcal{V}_{\textrm{P-ALOHA}}.
\end{equation}
\subsubsection{\mbox{S-ALOHA}}
In \mbox{S-ALOHA}, devices can transmit only at the border between two timeslots.  The vulnerability time of a transmission is equal to ToA. To be effective, \mbox{S-ALOHA} requires synchronization of all the devices with a common source. A synchronization error between devices can cause inter-slot collisions, that can be reduced by introducing a guard interval (GI).  The probability that transmissions in adjacent slots collide depends on the timing error distribution and the size of the GI. The intensity of the interferers $\mathcal{V}_{\textrm{S-ALOHA}}$ can be defined as 
\begin{equation}
\displaystyle
\label{eq:v_saloha}
	\mathcal{V}_{\textrm{S-ALOHA}}= \left(1+g\right)\left(1+p_L+p_R\right)\cdot\alpha\cdot \lambda,
\end{equation}
where $g$ is the ratio between the length of the GI $T_g$ and the ToA, $p_L$ and $p_R$ are the probability of experiencing inter-slot interference respectively from  the previous and successive.
For a Gaussian distributed timing error with standard deviation $d$, the probability of inter-slot collision with a message transmitted in the previous and next timeslot  are  derived from \cite{crozier1990sloppy} respectively  as
\begin{align}
p_L&=Q\left(\frac{T_g+T_p-5T_\text{sym}}{\sqrt{2}d}\right),   &   p_R&=Q\left(\frac{T_g}{\sqrt{2}d}\right).     
\end{align}

\label{sec:saloha}
Collisions caused by messages transmitted in the previous (i.e., left) timeslot are less likely since they affect the preamble of the transmitted message as highlighted in Sec.~\ref{sec:int_model}.
\subsubsection{NP-CSMA}
\label{sec:csma}
The use of CSMA can reduce the number of interfering devices by employing the LBT technique, where each ED senses the channel to determine if it is free before attempting a transmission. We consider an asynchronous \mbox{non-persistent} CSMA so that the EDs can save energy by avoiding synchronization to the gateway. The operation of an ED using the \mbox{NP-CSMA} protocol to transmit a message to the gateway can be described as follow: 1) the ED senses the channel and transmits immediately if it is free; 2) if the ED senses a busy channel, the transmission is rescheduled to a later time according to a random delay distribution. At this new point, the ED senses the channel again, repeating the process until it transmits the message.  
 During the backoff, the ED can go in an energy-saving mode as it is not required to sense the channel. 
Under these access rules, the interferers are not distributed according to a PPP but instead can be modeled as a  Mat\`ern hard-core process of Type II (MHCPP-II)~\cite{baccelli2010stochastic}. To use this model, it is necessary to assume that the probability of two EDs in the same contention domain having the same backoff counter value is negligible. This requires mechanisms that  alleviate the backoff timer ties problem to achieve an intact CSMA spectrum access, for example, by using adaptive contention window size techniques.
In an  MHCPP-II, each device is marked with a mark uniformly distributed  in $\left[0,1\right]$, representing the random backoff value selected by the device.
The MHCPP-II retains in a neighborhood only the device with the smallest mark; that is, in a contention domain, only the device with the lowest backoff counter can transmit.

\begin{prop}
The intensity of the interfering EDs for asynchronous NP-CSMA is given by 
\begin{equation}
\mathcal{V}_{\textrm{CSMA}}=\left(2-\frac{T_p-5T_\text{sym}}{\textrm{ToA}}\right)\left(1-H\right)\tilde{\lambda},
\label{eq:v_csma}
\end{equation}
where, $H$ is the probability that an ED is located within transmitter contention domain and $\tilde{\lambda}$ is the intensity of the transmission of the interfering EDs. 
\end{prop}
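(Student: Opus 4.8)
The plan is to derive \eqref{eq:v_csma} by starting from the LoRa-ALOHA intensity \eqref{eq:v_lora} and making the two changes that CSMA introduces relative to \mbox{P-ALOHA}: the effective transmission intensity changes, and carrier sensing adds a spatial thinning. The target is a product of the (unchanged) temporal vulnerability factor, a hidden-terminal probability $(1-H)$, and the transmitter intensity $\tilde{\lambda}$.

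First I would replace the ALOHA transmission intensity $\alpha\lambda$ by $\tilde{\lambda}$. Under LBT the EDs that concurrently capture the channel do not form an independently thinned PPP but the MHCPP-II introduced above, in which only the smallest-mark (lowest-backoff) device of each contention domain is retained. Its intensity $\tilde{\lambda}$ follows from the standard Mat\`ern Type II retention probability---obtained by integrating the survival probability $e^{-\Lambda m}$ over the uniform mark $m\in[0,1]$, with $\Lambda$ the expected number of competitors in a contention domain---and is evaluated explicitly for the annular LoRa regions later; here it enters only as the base spatial rate of ongoing transmissions.

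Next I would fix a tagged transmission received at the gateway and ask which transmitters can collide with it. Since an LBT device defers whenever it senses an ongoing signal, every node inside the tagged transmitter's contention (sensing) domain is silenced and cannot transmit concurrently; only hidden terminals outside that domain may interfere. The probability that a typical ED lies inside the tagged contention domain is $H$, so the interferers are obtained by thinning $\tilde{\lambda}$ with the complementary probability $(1-H)$. The temporal factor is then inherited verbatim from \eqref{eq:v_lora}: hidden terminals do not hear the tagged transmission, so their start times are uniform relative to it exactly as in ALOHA, and the same preamble-immunity reduction $(T_p-5T_\text{sym})/\textrm{ToA}$ applies. Multiplying the three factors gives \eqref{eq:v_csma}.

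The main obstacle is justifying the separability of the two spatial effects. The intensity $\tilde{\lambda}$ comes from the MHCPP-II, whose hard-core exclusion already forbids two \emph{transmitters} from lying within a sensing radius of each other, whereas $H$ is a \emph{marginal} geometric probability that an arbitrary ED falls in the tagged node's sensing disk. Writing the interferer intensity as the clean product $(1-H)\tilde{\lambda}$ thus treats the hidden/exposed classification as independent of the retention that produced $\tilde{\lambda}$. I would justify this under the usual MHCPP-II independence approximation (negligible probability of tied backoff marks), so that conditioning on the tagged node being a retained transmitter biases the surrounding transmitter statistics only through the marginal $(1-H)$. The annular LoRa geometry then enters solely through the later explicit evaluation of $H$ and $\tilde{\lambda}$, leaving the structural product \eqref{eq:v_csma} intact.
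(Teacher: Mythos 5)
Your proposal is correct and follows essentially the same route as the paper: the temporal factor $\bigl(2-\tfrac{T_p-5T_\text{sym}}{\textrm{ToA}}\bigr)$ is inherited from the asynchronous (LoRa/P-ALOHA) vulnerability argument, the factor $(1-H)$ thins out the EDs silenced inside the tagged transmitter's contention domain, and $\tilde{\lambda}$ is deferred to the MHCPP-II analysis of the subsequent lemma. Your explicit discussion of why the product $(1-H)\tilde{\lambda}$ is only an independence approximation (the hard-core retention and the marginal hidden-terminal probability are not truly decoupled) is a point the paper leaves implicit, and it is a worthwhile caveat rather than a deviation.
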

\begin{proof}
In asynchronous CSMA, the transmissions and backoffs are not aligned to timeslots. Therefore, the vulnerability time of a transmission is twice the ToA, and the first term in \eqref{eq:v_csma} is the same as in \eqref{eq:v_lora}. The second term is the probability that the generic ED in the annulus is located outside the contention domain of the transmitter. 

The fraction of EDs that are located inside the transmitter contention domain can be found from \cite{baccelli2010stochastic} as
\begin{equation}
\begin{aligned}
H&= \int_{0}^{2r_2} e^{-\frac{P_0}{p_\text{tx}l\left(r\right)} } f_{R}\left(r\right)dr ,
\end{aligned}
\label{eq:H}
\end{equation}
where $e^{-\frac{P_0}{P_\text{TX}l\left(r\right)} }$ is the probability that in a unit mean Rayleigh fading and path loss function $l\left(\cdot\right)$, the received signal strength at a distance $r$ from a transmitter is above the detection threshold $P_0$. The distance distribution between two random device using the same SF is $f_{R}(r)$. If the ED are uniformly distributed on a disk of radius $R$, $f_{R}(r)$ is given by \cite{mathai1999introduction}
\begin{equation}
f_{R}(r)=\frac{4r}{\pi R^2}\left[\cos^{-1}\left(\frac{r}{2R}\right)-\frac{r}{2R}\sqrt{1-\frac{r^2}{4R^2}}\right].
\end{equation}
In other geometries, such as in the case of annuli, the expression of the distance distribution becomes more complex.
\end{proof}
The expected number of devices in the contention domain of the transmitter can be obtained from \eqref{eq:H} as $E\left[n\right]=\lambda\pi \cdot \left(r_2^2-r_1^2\right) \cdot H$.
The expression of $\tilde{\lambda}$ must be derived from the analysis of the distribution of the EDs outside of the contention domain of the reference transmitter. The intensity of the MHCPP-II $\tilde{\lambda}$  can be found by the thinning process of the original PPP intensity $\lambda$ by the probability that an ED other than the reference transmitter has a message for the gateway and successfully contended for the channel. Due to the complexity of handling MHCPP-II, we approximate the MHCPP-II of the active interferers as an equi-dense PPP; that is, the interferers are uniformly and randomly distributed in the model.
\begin{lemm}
The intensity, $\tilde{\lambda}$, of the interfering  EDs is approximated with an MHCPP-II of intensity 
  
\begin{equation}
\label{eq:lambda_hat}
\tilde{\lambda}=\lambda p\frac{1-e^{-E\left[n_{A}^*\right]}}{E\left[n_{A}^*\right]}\text{,}
\end{equation}
where  $E\left[n_{A}^*\right]$ is the expected number of neighbors for one interfering ED and $p$ is the probability that an ED is contending for channel access.
\end{lemm}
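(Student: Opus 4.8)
The plan is to recognize \eqref{eq:lambda_hat} as the classical intensity formula for a Mat\`ern hard-core process of Type~II, specialized to the present parent intensity and mean neighbor count. First I would fix the parent process: the EDs that actually contend for the channel. Starting from the original PPP $\phi_\text{ED}$ of intensity $\lambda$ and thinning it independently by the contention probability $p$, the contenders form a PPP of intensity $\lambda p$ by the same independent-thinning property of the PPP already invoked for \mbox{P-ALOHA}. This will be the base process on which the MHCPP-II retention rule is applied.

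Next I would compute the retained intensity by the standard Palm-calculus derivation for MHCPP-II. By the Slivnyak--Mecke theorem I place a typical contender at the origin, carrying a mark $m$ uniform on $[0,1]$, while the remaining contenders still form a PPP of intensity $\lambda p$. Under the rule stated in the text, this point survives iff no neighbor in its contention domain carries a smaller mark. Since each neighbor independently has a mark below $m$ with probability $m$, the set of competing (smaller-mark) neighbors is Poisson with mean $m\,E\left[n_{A}^*\right]$, where $E\left[n_{A}^*\right]$ is the expected number of neighbors of one contender; hence the conditional survival probability is $e^{-m\,E\left[n_{A}^*\right]}$. Integrating over the mark and multiplying by the parent intensity,
\begin{equation*}
\tilde{\lambda}=\lambda p\int_{0}^{1} e^{-m\,E\left[n_{A}^*\right]}\,dm=\lambda p\,\frac{1-e^{-E\left[n_{A}^*\right]}}{E\left[n_{A}^*\right]},
\end{equation*}
which is exactly \eqref{eq:lambda_hat}.

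The main obstacle I anticipate lies in justifying the neighbor count $E\left[n_{A}^*\right]$ in this setting, where the contention domain is not a deterministic hard-core disk but a random, distance-dependent region induced by Rayleigh fading and the detection threshold $P_0$, precisely the mechanism already captured through $H$ in \eqref{eq:H}. I would resolve this by defining two contenders to be neighbors through the pairwise connection probability $e^{-P_0/\left(p_\text{tx}\,l(r)\right)}$, so that $E\left[n_{A}^*\right]$ is the expected number of \emph{connected} contenders, obtained by integrating this probability against the inter-device distance distribution $f_{R}$ over the annulus. With neighbors defined in this soft-core sense, the subsequent mark-based thinning goes through unchanged, and the \emph{equi-dense PPP} approximation announced in the text preserves the Poisson structure of the competing-neighbor set, so the conditional survival probability and the final integral remain valid.
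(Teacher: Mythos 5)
Your derivation of the formula itself follows the same route as the paper: both thin the parent PPP by the contention probability $p$ to obtain intensity $\lambda p$, and both multiply by the Mat\`ern type-II retention probability $\bigl(1-e^{-E\left[n_{A}^*\right]}\bigr)/E\left[n_{A}^*\right]$. The paper simply cites that retention probability from \cite{baccelli2010stochastic}, whereas you rederive it via Slivnyak and the mark integral $\int_0^1 e^{-m E\left[n_{A}^*\right]}dm$; that is a more self-contained version of the same step, and your remark that the soft-core (fading-dependent) neighbor relation is an independent thinning, hence preserves the Poisson structure of the smaller-mark competitors, is the right justification for why the classical formula still applies.

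The gap is in what $E\left[n_{A}^*\right]$ denotes. You take it to be the expected number of connected contenders of a \emph{typical} contender, obtained by integrating $e^{-P_0/\left(p_\text{tx} l(r)\right)}$ against $f_R$ over the annulus---but that quantity is the paper's $E\left[n_A\right]=E\left[n\right]\cdot p=\lambda\pi\left(r_2^2-r_1^2\right)Hp$, the neighbor count of the reference transmitter. The lemma's $E\left[n_{A}^*\right]$ is the neighbor count of an \emph{interfering} ED, i.e.\ one conditioned to lie outside the reference transmitter's contention domain, whose set of competitors excludes all the EDs blocked inside that domain. Capturing this conditioning is where the paper spends essentially all of its effort: the upper bound \eqref{eq:bound} distinguishing the cases $H\gtrless 0.5$, the resulting $\min\left(1-H,H\right)$ term, and the scaling function $g(H)$ in \eqref{eq:g(H)} constructed in the appendix to interpolate between the regimes of one versus several simultaneous interferers. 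Your retention-probability argument goes through unchanged once $E\left[n_{A}^*\right]$ is supplied, but as written your proposal would plug in the neighbor count of the wrong point (a generic active ED rather than an interferer) and therefore does not yet complete the lemma as the paper intends it.
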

\begin{proof}
Not all the EDs in a contention domain will be actively trying to transmit, hence we introduce a probability $p$ that a device  is contending for channel access. The intensity of the EDs that have a message for the gateway is $\lambda p$.

Let  $E\left[n_{A}^*\right]$ be the expected number of neighbors for one interfering ED. An ED  is able to transmit only if it has the smallest marker among all the active EDs in its contention domain. The probability for this event to occur can be found according to \cite{baccelli2010stochastic} as $\frac{1-e^{-E\left[n_{A}^*\right]}}{E\left[n_{A}^*\right]}$.
Given that all active EDs within the contention domain of the reference transmitter are blocked from transmitting, the expected number of neighbors for an interfering node is upper bounded by 
\begin{equation}
\label{eq:bound}
E\left[n_{A}^*\right] \leq \begin{cases}\lambda \pi \left(r_2^2-r_1^2\right)\cdot \left[1-H\right]p &\text{ if }  H\geq 0.5\text{;}\\
\lambda \pi \left(r_2^2-r_1^2\right)\cdot Hp &\text{ if }  H<0.5\text{.}\end{cases}
\end{equation}
Depending on the detection threshold $P_0$ and the annulus geometry, not all EDs outside of the transmitter contention domain  will be part of the same interference contention domain; that is, for a transmitter, multiple interfering ED can be present at the same time.
To find the expected number of active EDs within the contention domain of an interfering ED, we scale the upper bound of the number of neighbor by a smooth function $g(H)$.  The expected number of neighbors for an interfering ED is 
\begin{equation}
\label{eq:En*}
E\left[n_{A}^*\right]\simeq p\pi \lambda \left(r_2^2-r_1^2\right) \cdot \min \big( 1- H,H \big)\cdot \left(g(H)\right)^{-1}\text{.}
\end{equation}
The derivation of $g\left(H\right)$ is presented in Appendix \ref{sec:Appendix}.
\end{proof}
Whereas in \mbox{P-ALOHA} and \mbox{S-ALOHA} the transmission probability can be assumed to be equal to the activity factor $\alpha$, in CSMA it has to be determined.  An ED has to sense the channel to be free before transmitting, so that under the same traffic load, the probability that an ED is contending for channel access is larger than the transmission probability of \mbox{P-ALOHA} or \mbox{S-ALOHA} ($p\geq \alpha$).
To find the probability $p$ that an ED is contending for channel access, we model each ED as a \textit{Geo/Geo/1} queue without buffer.
It implies that an ED attempts to transmit only the most recent of the generated messages. Both the generation and transmission of messages at an ED follow a geometric distribution with probability $p_A=\alpha$ and $p_D$, respectively. The transition probability matrix $\mathbf{P}$ of the \textit{Geo/Geo/1} queue is given by
\begin{equation}
\mathbf{P}=  \bigg[\begin{smallmatrix}
   1-p_A & p_A\\[1mm]
   \left(1-p_A\right)p_D& 1-p_D + p_A p_D
   \end{smallmatrix}\bigg]
\label{eq:Ptran}
\end{equation}
The probability that an ED has a message is the probability that the queue is non-empty. By finding the steady state vector $\left[\pi_0, \pi_1\right]$ of the probability matrix in (\ref{eq:Ptran}), $p$ is determined as
\begin{equation}
\label{eq:p}
p=\pi_1=\left(1+\frac{\left(1-p_A\right)p_D}{p_A}\right)^{-1}.
\end{equation}
%The expression for the probability of a departure $p_D$ is different depending if the gateway provides acknowledgment and if retransmissions are allowed.
%In the case in which messages are acknowledged and retransmissions are allowed, a departure from the queue is equivalent to a  successful transmission of the message.  The probability that a message is successfully transmitted by a ED is
%\begin{equation}
%p_D=P_s\frac{1-e^{-E\left[n_A\right]}}{E\left[n_A\right]},
%\end{equation}
%where $P_s$ is the probability of a successful reception of the message at the gateway and $E\left[n_{A}\right]=E\left[n\right]\cdot p$ is the expected number of active neighbours for the transmitting ED.
Without acknowledgments, a departure from the queue is equivalent to a transmission. The departure probability is
\begin{equation}
p_D=H\frac{1-e^{-E\left[n_A\right]}}{E\left[n_A\right]}+\left(1-H\right)\frac{1-e^{-E\left[n_A^*\right]}}{E\left[n_A^*\right]}\text{,}
\end{equation}
where $E\left[n_{A}\right]=E\left[n\right]\cdot p$ is the expected number of active neighbours for the transmitting ED.

The intensity of the MHCPP-II of interferers $\mathcal{V}_{\textrm{CSMA}}$ is found  from \eqref{eq:v_csma} in Proposition 1 by using \eqref{eq:H}, \eqref{eq:lambda_hat} and \eqref{eq:p}.

\subsection{Derivation of CCDFs of SNR and SIR}
\label{sec:SNR&SIR}
We model the wireless channel as a Rayleigh block-fading with additive white Gaussian noise (AWGN) and free space path loss. The
variance of the AWGN is $\sigma^2 [\textrm{dBm}]=-174+\textrm{NF}+10\log_{10}\textrm{BW}$, where NF is the noise figure of the receiver and BW is the channel bandwidth.
Messages are transmitted with power $p_\text{TX}$ and suffer from path-loss modeled by the power-law function $l\left(r\right)=\gamma r^{-\beta}$, where $\gamma$ and $\beta$ are respectively the frequency dependent factor and the path-loss exponent. For the carrier wavelength $\lambda_c$, $\gamma=\left(\lambda_c/4\pi\right)^2$.

If a message is transmitted by a device located at a distance $r$ from the gateway, the received power is given by $
 p_\text{rx}\left(r\right)=p_\text{tx} h l(r)$,
where $h$ is the channel gain between the transmitter and the gateway. The probability that the SNR of the received message at the gateway is above the threshold $\theta$ is
\begin{equation}
\label{eq:Psnr}
\begin{aligned}
\displaystyle
	\mathbb{P}\left[\text{SNR}(r)\!\geq \theta\right]\!=\!\mathbb{P}\left[h\geq \frac{\sigma^2 \theta}{p_\text{tx} l(r)}\right] =\text{exp}\left(\!- \frac{\sigma^2\theta}{p_\text{tx} l(r)} \right)\text{,}
\end{aligned}
\end{equation}
obtained by the fact that $h\sim \text{exp}(1)$.

As discussed in Sec.~\ref{sec:int_model}, the receiver might still be able to correctly decode the message depending on the relative signal strength between the signal and the interferers. 
At the gateway, the SIR for a signal transmitted by an ED at distance $r$ is $\text{SIR}(r)={p_\text{rx}(r)}/{\mathcal{I}^*}$,

where $\mathcal{I}^*$ is the dominant interference given by (\ref{eq:Iq}). The probability that the SIR of a message under co-SF interference is above the threshold $\delta$ is given by 
\begin{equation}
\begin{aligned}
\displaystyle
	&\mathbb{P}\left[ \text{SIR}(r)\geq \delta\right]=\mathbb{P}\left[ \mathcal{I}^* \leq \frac{p_\text{tx} h l\left(r\right)}{ \delta}\right]\text{.}
	\label{eq:CCDF_SIR}
\end{aligned}
\end{equation}

Using the series $\exp(x)=\sum_{k=0}^{\infty}x^k/k!$ in \eqref{eq:F_I}, and taking the expectation over the channel gain $h$, \eqref{eq:CCDF_SIR} becomes 
\begin{equation}
\label{eq:Psir2}
\begin{aligned}
\displaystyle
	&\mathbb{P}\left[ \text{SIR}(r)\geq \delta\right]\!=\! e^{-\mu}\!\!\int_0^\infty\!\!\!\!\! e^{-z}\!\exp\left(\!\mu F_{X_{i}}\left(\frac{z l\left(r\right)}{\delta}\right)\!\!\right) \!dz\text{,}
\end{aligned}
\end{equation}
\begin{comment}
\begin{equation}F_{X_{i}}\left(z\right)\!=\!\frac{ \gamma^{\frac{2}{\beta}}}{R^2}\!\!\left[\frac{e^{\frac{-z}{l\left(r\right)}}-1}{{l\left(r\right)}^{\frac{2}{\beta}}}-z^{-\frac{2}{\mu}}\Gamma\!\!\left(\frac{\beta+2}{\beta}, \frac{z}{l\left(r\right)}\right)\right]_{r=r_1}^{r=r_2}\end{equation}
where $\Gamma\left(\cdot, \cdot\right)$ is the upper incomplete gamma function.
\end{comment}
where $F_{X_{i}}\left(\cdot\right)$ is the CDF of the product of the probability distributions of $l(r)$ and $h$, obtained from \cite[(10)]{mahmood2019scalability}.
\subsection{Performance Metrics}
\label{sec:perf_metrics}
In this section, we derive the performance metrics used to compare the random access protocols.

\subsubsection{Success and Coverage Probability}
A transmission from an ED at distant $r$ from the gateway is successful only if both the SNR and SIR conditions are satisfied at the gateway. The success probability for an ED transmitting using SF~$q$ is
\begin{equation}
\displaystyle
\label{eq:ps}
\begin{aligned}
	p_{\text{succ},q}\left(r\right)&=\mathbb{P}\big[\left\lbrace \text{SIR}(r)\geq \delta\right\rbrace\cap\left\lbrace \text{SNR}(r)\geq \theta_q\right\rbrace\big]\\
	&\geq \mathbb{P}\left[ \text{SIR}(r)\geq \delta\right]\cdot \mathbb{P}\left[\text{SNR}(r)\geq \theta_q\right]\text{,}
\end{aligned}
\end{equation}
where $\mathbb{P}\left[\text{SNR}(r)\geq \theta_q\right]$ and $\mathbb{P}\left[ \text{SIR}(r)\geq \delta\right]$ are given by (\ref{eq:Psnr}) and (\ref{eq:Psir2}), respectively. 
The inequality comes from the fact that the probability that both the SNR and SIR conditions are satisfied are not independent. For instance, a message that arriving at the gateway satisfies the SIR condition, it is more likely to also satisfy the SNR condition. 

Let $l_q$ and $l_{q-1}$, respectively, be the outer ($r_2$) and inner radius ($r_1$) of the annulus containing all EDs using SF~$q$. The coverage probability of an ED in the service area is given by
\begin{equation}
\displaystyle
\label{eq:Pcov}
	p_\text{cov}=\sum_{q}\int_{l_{q-1}}^{l_{q}} p_{\text{succ},q}\left(r\right)\cdot f_\text{ED}(r) dr.
\end{equation}

\subsubsection{Channel Throughput}
We calculate the channel throughput as the product of the offered traffic and the success probability. Because of the assumption in Sec.~\ref{sec:int_model}, we consider orthogonal SFs; hence the channel throughput for SF~$q$ is given by
\begin{equation}
\displaystyle
\label{eq:Th}
	S_q=\alpha\lambda \int_{l_{q-1}}^{l_{q}} p_{\text{succ},q}(r) \cdot f_\text{ED}(r) dr.
\end{equation}

\subsubsection{Energy Efficiency}
\label{subsect:ee}

Let $P_\text{TX}$ and $P_\text{RX}$,  respectively, be the power consumption of an ED while transmitting and receiving/sensing the channel. Assuming that the energy consumption in sleep mode is negligible, the energy efficiency of an ED at a distance $r$ from the gateway can be found as
\begin{equation}
\displaystyle
\label{eq:eta}
	\eta\left(r\right)=\frac{p_{\text{succ},q}\left( r\right)\cdot \text{Payload}}{E_0}\text{,}
\end{equation}
where Payload is the payload size of a message, and $E_0$ is the energy used to deliver a message to the gateway, which depends on employed channel access method in the network.

In \mbox{P-ALOHA}, since all the energy is used by a device for transmitting its messages, $E_0$ is
\begin{equation}
\displaystyle
\label{eq:E_0}
	E_0=P_\text{TX}\text{ToA}\text{.}
\end{equation}

In \mbox{S-ALOHA}, additional energy is required for each transmission to maintain the synchronization with the gateway. We assume that the synchronization is maintained by periodic beacons of duration $T_B$ transmitted by the gateway every $T_\text{SYNC}$. In comparison with the synchronization method using  two-way message exchange of \cite{polonelli2019slotted}, the synchronization by periodic beacons assumed  in this analysis reduces the communication overhead in the network.
\begin{equation}
\displaystyle
\label{eq:nuS}
	E_0=P_\text{TX}\text{ToA}+P_\text{RX}\cdot T_B \cdot \left(\frac{\text{ToA}}{\alpha T_\text{SYNC}}\right),
\end{equation}
where the additional term is the energy used for synchronizing the EDs.

The energy efficiency for NP-CSMA can be determined as
\begin{equation}
\displaystyle
\label{eq:nuc}
	E_0=P_\text{TX}\text{ToA}+P_\text{RX}\frac{E\left[n_A\right]}{1-e^{-E\left[n_A\right]}} T_\text{CAD},
\end{equation}
where the second term is the average energy used by CAD, obtained by multiplying the expected number of CAD attempts per transmission with the CAD duration $T_\text{CAD}$. Since messages are not acknowledged by the GW,  retransmissions are not considered when calculating the energy efficiency. Moreover, since the EDs enter energy-saving mode during their backoff, we assume that the energy consumption during the random backoff is negligible.

\section{Results and Discussion}
\label{sec:Results_Analysis}
The analytical model was validated using Monte-Carlo simulations. The success probability at each distance is an average value of 2000 simulation runs, while the error bars in the figures correspond to a confidence interval of 95\%.
The studied channel access techniques \mbox{P-ALOHA}, \mbox{S-ALOHA} and \mbox{NP-CSMA}, were compared in terms of success probability, energy efficiency, and channel throughput. The analysis is based on typical parameters of LoRa for an outdoor monitoring application, given in Table~\ref{tb:par}. Unless explicitly stated, the EDs can belong to any of the three classes defined in LoRaWAN, class A, B and C.
\begin{table}[!htb]
\caption{LoRa Parameters}
\centering
\scalebox{0.9}{\setlength\tabcolsep{3 pt}
\begin{tabular}{lllllll}
\cline{1-3} \cline{5-7}
\multicolumn{1}{|l|}{\textbf{Parameter}} & \multicolumn{1}{l|}{\textbf{Sym.}} & \multicolumn{1}{l|}{\textbf{Value}} & \multicolumn{1}{l|}{} & \multicolumn{1}{l|}{\textbf{Parameter}} & \multicolumn{1}{l|}{\textbf{Sym.}} & \multicolumn{1}{l|}{\textbf{Value}} \\ \cline{1-3}\cline{1-3} \cline{5-7} \cline{5-7}  
\multicolumn{1}{|l|}{Bandwidth} & \multicolumn{1}{l|}{$B$} & \multicolumn{1}{l|}{$125$ kHz} & \multicolumn{1}{l|}{} & \multicolumn{1}{l|}{Noise PSD} & \multicolumn{1}{l|}{$N_0$} & \multicolumn{1}{l|}{$-174$ dBm/Hz} \\
\multicolumn{1}{|l|}{Carrier Frequency} & \multicolumn{1}{l|}{$f_c$} & \multicolumn{1}{l|}{$868$ MHz} & \multicolumn{1}{l|}{} & \multicolumn{1}{l|}{Noise Figure} & \multicolumn{1}{l|}{NF} & \multicolumn{1}{l|}{$6$ dBm } \\
\multicolumn{1}{|l|}{Transmit Power} & \multicolumn{1}{l|}{$p_\text{tx}$} & \multicolumn{1}{l|}{$14$ dBm} & \multicolumn{1}{l|}{} & \multicolumn{1}{l|}{Activity Factor} & \multicolumn{1}{l|}{$\alpha$} & \multicolumn{1}{l|}{$0.33$ \%} \\
\multicolumn{1}{|l|}{Pathloss Exponent} & \multicolumn{1}{l|}{$\beta$} & \multicolumn{1}{l|}{3} & \multicolumn{1}{l|}{} & \multicolumn{1}{l|}{Coding Rate} & \multicolumn{1}{l|}{} & \multicolumn{1}{l|}{4/8} \\
\multicolumn{1}{|l|}{Message Preamble} & \multicolumn{1}{l|}{} & \multicolumn{1}{l|}{8 Symbols} & \multicolumn{1}{l|}{} & \multicolumn{1}{l|}{Payload} & \multicolumn{1}{l|}{} & \multicolumn{1}{l|}{10 Bytes} \\
\multicolumn{1}{|l|}{Clock Skew} & \multicolumn{1}{l|}{} & \multicolumn{1}{l|}{40ppm} & \multicolumn{1}{l|}{} & \multicolumn{1}{l|}{Guard Interval} & \multicolumn{1}{l|}{$T_g$} & \multicolumn{1}{l|}{10.24 ms} \\\cline{1-3} \cline{5-7} 
 &  &  &  &  &  & 
\end{tabular}}
\vspace{-10pt}
\label{tb:par}
\end{table}
The SFs are assigned to the uniformly distributed EDs in ascending order according to the distance from the gateway. Unless otherwise stated, the SFs assignment follows an equal-interval-based allocation, generating annuli of equal width. The SNR thresholds are $\theta_q\in\left\lbrace-6,-9,-12,-15,-17.5,-20\right\rbrace$ dB for $q=7,\ldots,12$, and the SIR threshold $\delta=1$ dB \cite{georgiou2017low}. 
\begin{comment}The list of spreading factor parameter is presented in Table~\ref{tb:SF_par}.
\begin{table}[!htb]
\centering
\caption{Spreading Factors Parameters}
\label{tb:par2}
\begin{tabular}{l|r|r|r|r|}
\cline{1-5}
\multicolumn{1}{|c|}{\textbf{SF}}& \multicolumn{1}{c|}{\begin{tabular}[c]{@{}c@{}}\textbf{Radii}\\ $r_1-r_2$\end{tabular}} & \multicolumn{1}{c|}{\begin{tabular}[c]{@{}c@{}}\textbf{SNR Th.}\\ $\theta$\end{tabular}} & \multicolumn{1}{c|}{\begin{tabular}[c]{@{}c@{}}\textbf{SIR Th.}\\ $\delta$\end{tabular}}&  \multicolumn{1}{c|}{\begin{tabular}[c]{@{}c@{}}\textbf{Data rate }\\ $DR(p)$\end{tabular}} \\\hline\hline
\multicolumn{1}{|l|}{$\text{SF}_{7}$}  & $0 - 1\text{ km}$ & $-6\text{ dB}$ & $1\text{ dB}$   & $5468\text{ bps}$\\ \hline
\multicolumn{1}{|l|}{$\text{SF}_{8}$}  & $1 - 2\text{ km}$ & $-9\text{ dB}$ & $1\text{ dB}$   & $3125\text{ bps}$\\ \hline
\multicolumn{1}{|l|}{$\text{SF}_{9}$}  & $2 - 3\text{ km}$ & $-12\text{ dB}$ & $1\text{ dB}$  & $1757\text{ bps}$\\ \hline
\multicolumn{1}{|l|}{$\text{SF}_{10}$} & $3 - 4\text{ km}$ & $-15\text{ dB}$ & $1\text{ dB}$  & $977\text{ bps}$\\ \hline
\multicolumn{1}{|l|}{$\text{SF}_{11}$} & $4 - 5\text{ km}$ & $-17.5\text{ dB}$& $1\text{ dB}$ & $537\text{ bps}$\\ \hline
\multicolumn{1}{|l|}{$\text{SF}_{12}$} & $5 - 6\text{ km}$ & $-20\text{ dB}$& $1\text{ dB}$    & $293\text{ bps}$\\ \hline
\end{tabular}
\medskip
\label{tb:SF_par}
\end{table}
\end{comment}
\subsection{Success and Coverage Probability}
\subsubsection{LoRa and S-LoRa}

In this section, the performance of perfectly synchronized (ideal) \mbox{S-ALOHA} is compared to an implementation of slotted ALOHA in LoRa (\mbox{S-LoRa}).
A discussion on how to provide synchronization in LoRa is not in the scope of this work.  In studying S-LoRa, we assume that all EDs are synchronized using periodic beacons from the gateway and belong to either class~B or class~C. The synchronization error is assumed to be normally distributed with a standard deviation equal to the product of clock skew and the synchronization interval $T_\text{SYNCH}$.

\begin{figure}[!htp]
\centering
\includegraphics[width=0.9\columnwidth]{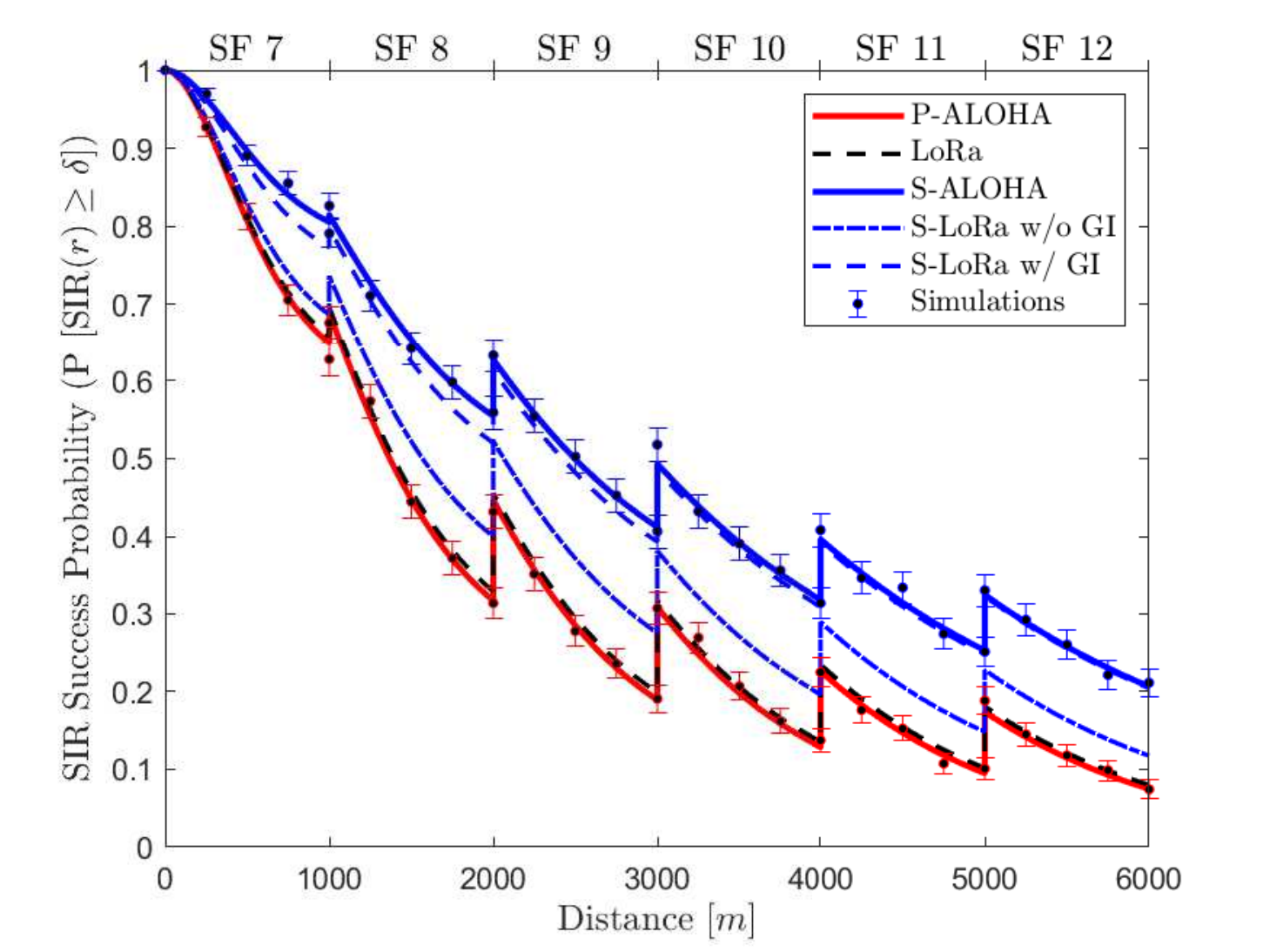}
\caption{SIR success probability of LoRa, s-LoRa, \mbox{P-ALOHA} and \mbox{S-ALOHA} $\bar{N}=3000$.}

\label{fig:p_succ_1}
\end{figure}

\begin{figure}[!htp]
\centering
       \includegraphics[width=0.90\columnwidth]{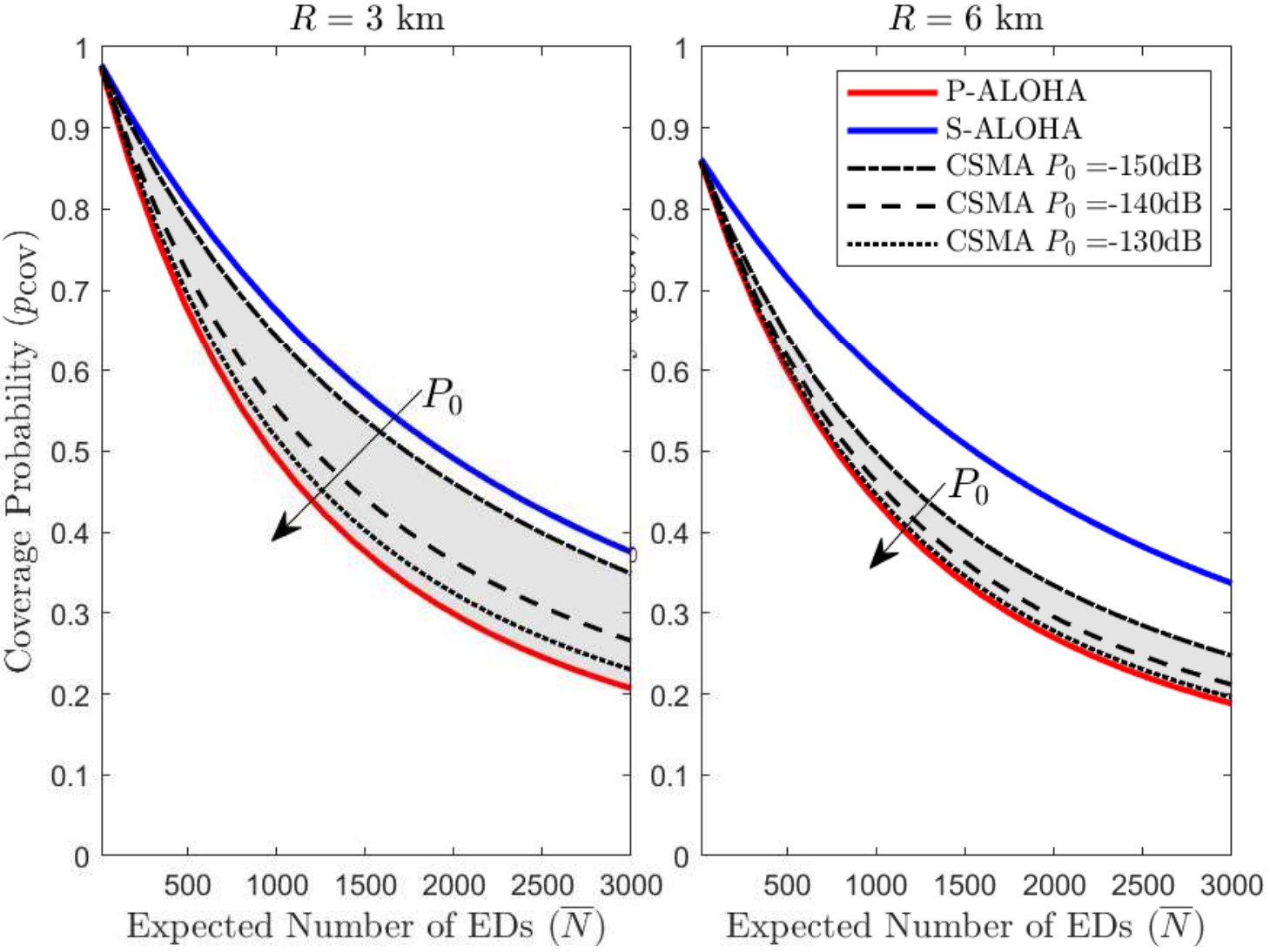}
     \caption{Coverage probability of \mbox{P-ALOHA}, \mbox{S-ALOHA} and \mbox{NP-CSMA}, \mbox{$\bar{N}=3000$}.}
     \vspace{-15pt}
     \label{fig:p_cov}
\end{figure}

\begin{figure*}[!htp]
    \centering
     \subfloat[Equal-interval-based allocation\label{fig:p_succ_2a}]{%
       \includegraphics[width=0.44\textwidth]{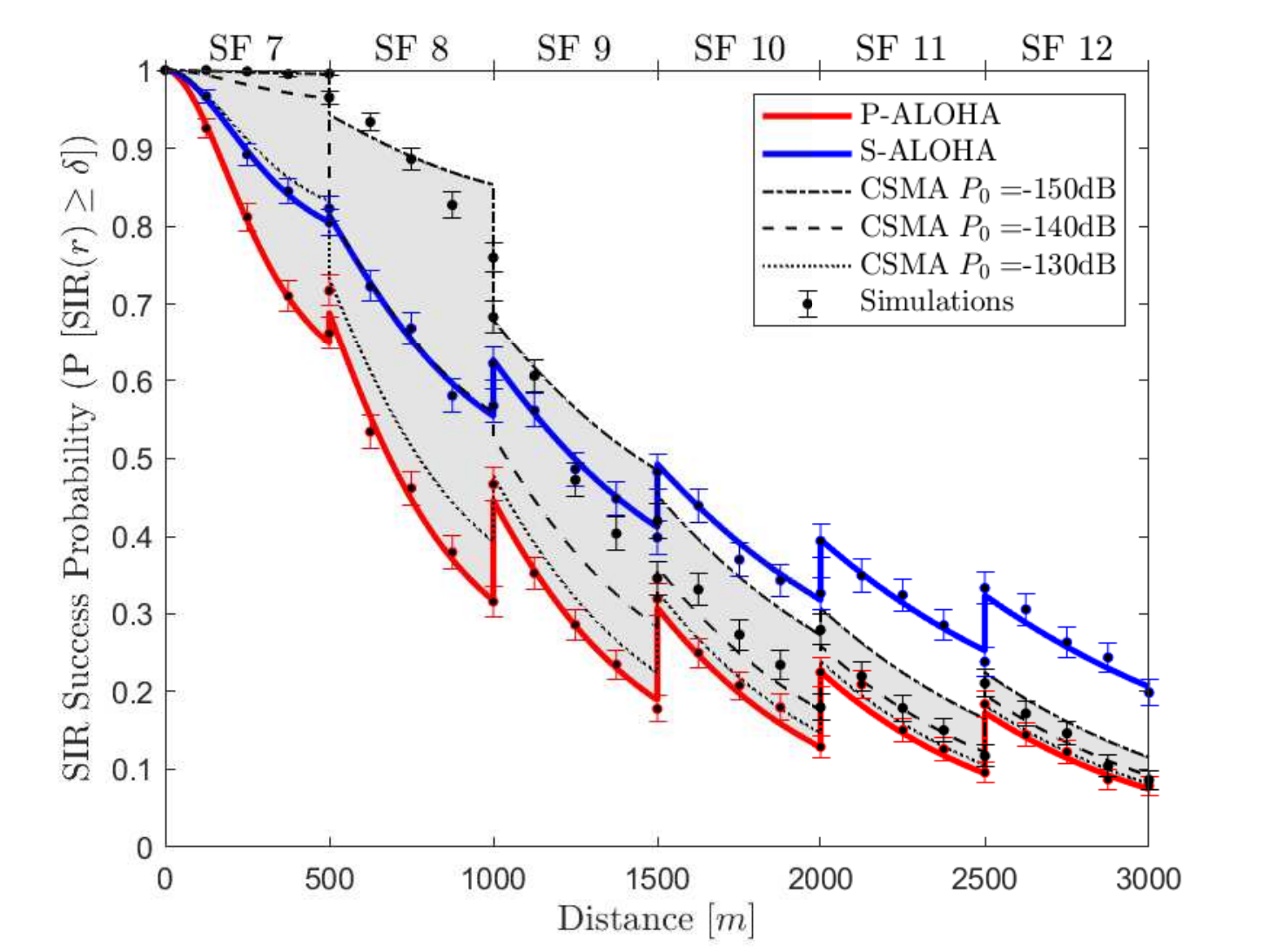}
     }
     \hspace{45pt}
     \subfloat[Equal-area-based allocation\label{fig:p_succ_2b}]{%
       \includegraphics[width=0.44\textwidth]{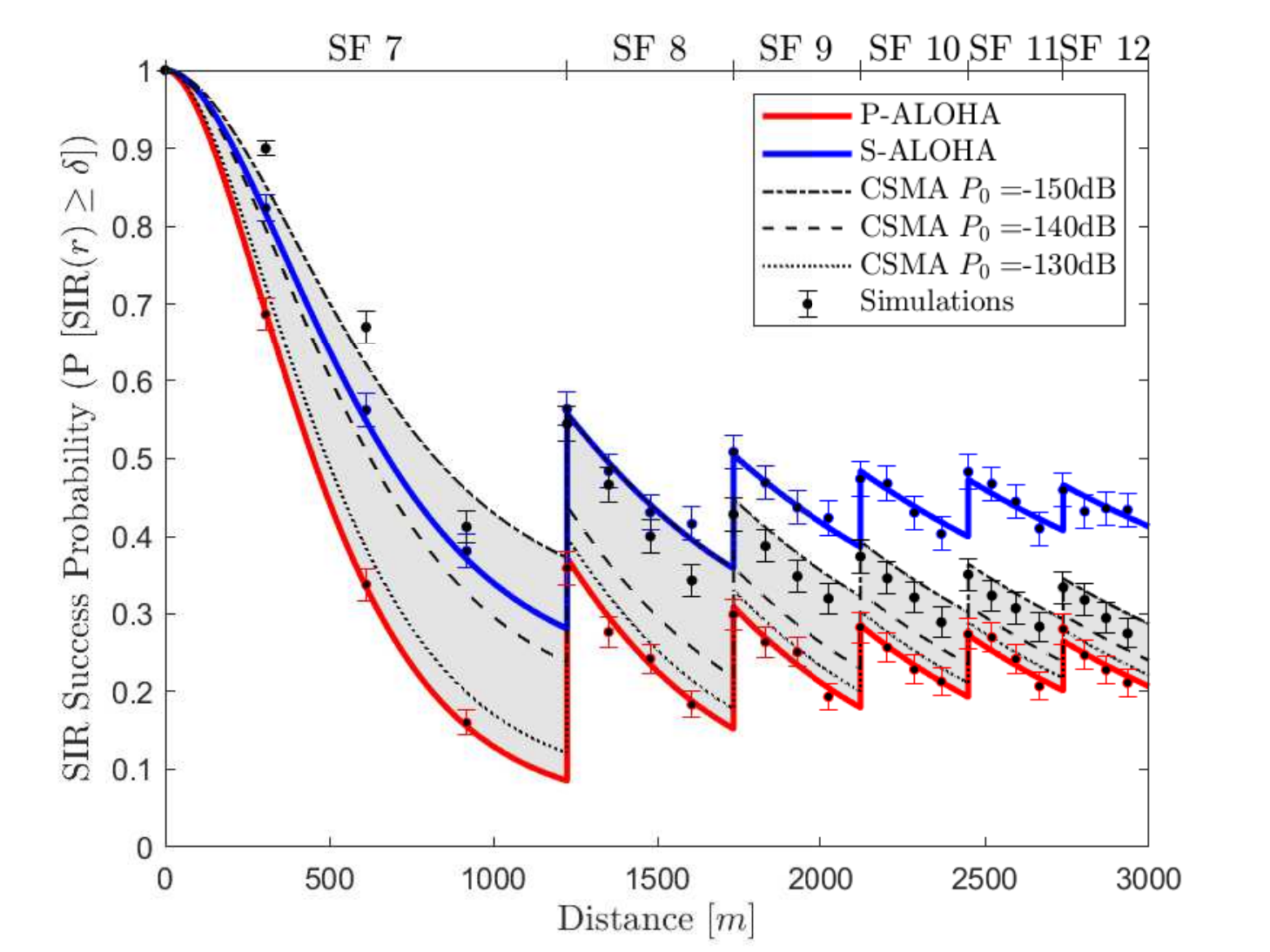}
     }
     \caption{SIR success probability of \mbox{P-ALOHA}, \mbox{S-ALOHA} and \mbox{NP-CSMA}, $\bar{N}=3000$.}
     \label{fig:p_succ_2}
     \vspace{-12pt}
\end{figure*}
Fig.~\ref{fig:p_succ_1} shows SIR success probability obtain from \eqref{eq:Psir2} when $\bar{N}=3000$ for \mbox{P-ALOHA}, LoRa, \mbox{S-ALOHA}, \mbox{S-LoRa} with GI (S-LoRa w/ GI)  and \mbox{S-LoRa} without GI (S-LoRa w/o GI). The figure shows the characteristic saw-tooth  behavior of the success probability in LoRa, as presented and analyzed before in \cite{georgiou2017low} and \cite{mahmood2019scalability}.
The shorter vulnerability time of \mbox{S-ALOHA} gives a clear advantage over \mbox{P-ALOHA}, reducing the interferers by half and increasing the success probability for all SFs on average by 0.16. %On average, for the same offered traffic, a \mbox{P-ALOHA} transmission is affected by twice as many interferers as a \mbox{S-ALOHA} transmission. 
Accounting for the capture effects of LoRa shows a small performance improvement over \mbox{P-ALOHA}. In \mbox{S-LoRa}, when synchronization error is considered, it shows a decrease in the SIR success probability compared to the ideal case (\mbox{S-ALOHA}). When the GI is removed (i.e., \mbox{S-LoRa} w/o GI), the success probability of \mbox{S-LoRa} does not degrade to the performance of LoRa, since a part of the LoRa preamble acts as a GI versus collisions with messages transmitted in the previous timeslot. 

Fig.~\ref{fig:p_cov} shows the coverage probability of LoRa and \mbox{S-LoRa} when the EDs are located within a typical distance of $3$ and $6$ km from the gateway.  The coverage probability decreases as the number of EDs, and hence the interference, increases. LoRa and \mbox{S-LoRa} can offer similar coverage probabilities. However, \mbox{S-LoRa} offers better scalability for an increasing number of EDs. %as the number of EDs increases \mbox{S-LoRa} is able to provide an higher coverage probability. 

\subsubsection{NP-CSMA}
Fig.~\ref{fig:p_succ_2} compares the SIR success probability of CSMA-based LoRa with that of P-ALOHA and S-ALOHA for equal-interval-based and equal-area-based SF allocation. 
In CSMA, the SIR success probability depends on the detection threshold used by the sensing mechanism. In this analysis, we considered sensing thresholds  $P_0=\left\lbrace-150, -140, -130\right\rbrace$ dB representing extreme operating range of the CAD mechanism of LoRa chipset \cite{CAD}. % if this was to be used to implement a CSMA access mechanism as previous works have proposed \cite{liando2019known}. 
High sensing thresholds cause the success probability of CSMA to degrade to the performance of \mbox{P-ALOHA}, whereas low sensing threshold increases the success probability of CSMA above that of \mbox{S-ALOHA}. The CSMA performance is affected  by the size of the area in which the same SF EDs are located, making the analysis of different SF allocation methods particularly interesting. Devices in the outer annuli are affected more by the hidden terminal problem because of their longer average distance. However, using a low sensing threshold to combat the hidden terminal problem may lead to additional backoff by the EDs located in the outer annuli due to interference from neighboring gateways. %As the proposed model only consider a single cell an analysis of the optimal $P_0$ cannot be undertaken. 

From Fig.~\ref{fig:p_succ_2}, it can be argued that for the outer annuli, even in our conservative model with a single gateway, \mbox{NP-CSMA} offers little gain over \mbox{P-ALOHA}. Hence CSMA should only be used  by the EDs with the smallest SFs, located closest to the gateway and to each other.
In the case of equal-interval-based allocation, if the distance between EDs is smaller than the sensing range, no hidden terminals are present, and all the EDs have the same success probability regardless of their position within the annulus (SF~7 in  Fig.~\ref{fig:p_succ_2a}). 
Conversely, in the case of equal-area-based allocation (Fig.~\ref{fig:p_succ_2b}), the difference between the outer and inner radii of the smaller SFs increases significantly, resulting into  a more steep decrease of the success probabilities. The benefit of using channel sensing for small SFs, clearly visible for equal-interval-based allocation in Fig.~\ref{fig:p_succ_2a}, is partially nullified by the increased annuli area when the equal-area-based allocation is used. Overall, for all three access mechanisms, the equal-interval-based allocation shows a more fair distribution of the SIR success probability between EDs in different annuli.
Fig.~\ref{fig:p_cov} compares the coverage probability of NP-CSMA with the other access schemes for an equal-interval-based allocation. Given that \mbox{S-ALOHA} outperforms both \mbox{NP-CSMA} and \mbox{P-ALOHA} for large SFs, the coverage probability offered by  \mbox{S-ALOHA} is the highest. 

\subsection{Energy Efficiency}
As mention in Sec.~\ref{subsect:ee}, in comparing the energy efficiency of different access mechanisms, we account for the energy consumption contributions of transmissions, synchronization (\mbox{S-ALOHA}) and channel sensing (NP-CSMA). 
The energy efficiency analysis is based on the parameters in Table~\ref{tb:EE_par}.
\begin{table} [!b]
\caption{Energy Model Parameters~\cite{SX1261}}
\centering
\scalebox{0.9}{
\begin{tabular}{ | l | l | l | }
    \hline 
    \textbf{Parameter} & \textbf{Symbol} & \textbf{Value} \\\hline\hline
    Power Consumption in TX  & $P_\text{TX}$ & $84.15$ mW \\
    Power Consumption in RX/CAD  & $P_\text{RX}$ & $15.18$ mW \\ \hline\hline
    %\multicolumn{3}{|c|}{\textbf{ S-ALOHA}}\\\hline
    Beacon Preamble + Payload& & 10 Symbols + 16 Bytes \\     
    Synchronization Interval   & $T_\text{SYNCH}$ & 128 s\\\hline\hline
    %\multicolumn{3}{|c|}{\textbf{CSMA}}\\\hline
    CAD Duration  & $T_\text{CAD}$ &2 symbols \\
    \hline
\end{tabular}}
\label{tb:EE_par}
\end{table}
\begin{figure}[!ht]
     \subfloat[][\centering Energy efficiency, $\bar{N}=3000$\label{subfig:EEa}]{%
       \includegraphics[width=0.486\columnwidth]{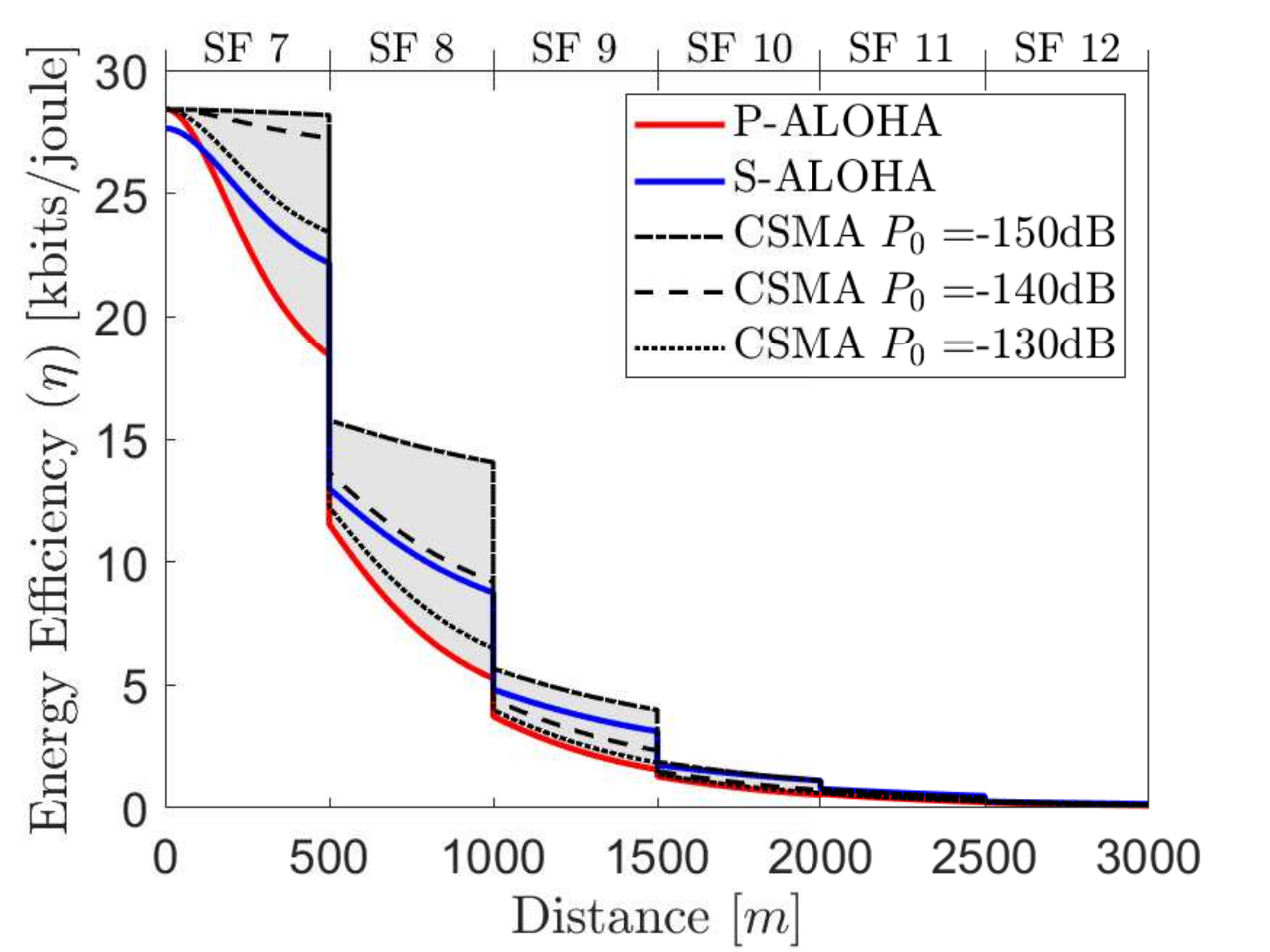}
     }
     \hfill
     \subfloat[][\centering Mean energy efficiency
     
     \label{subfig:EEc}]{%
       \includegraphics[width=0.486\columnwidth]{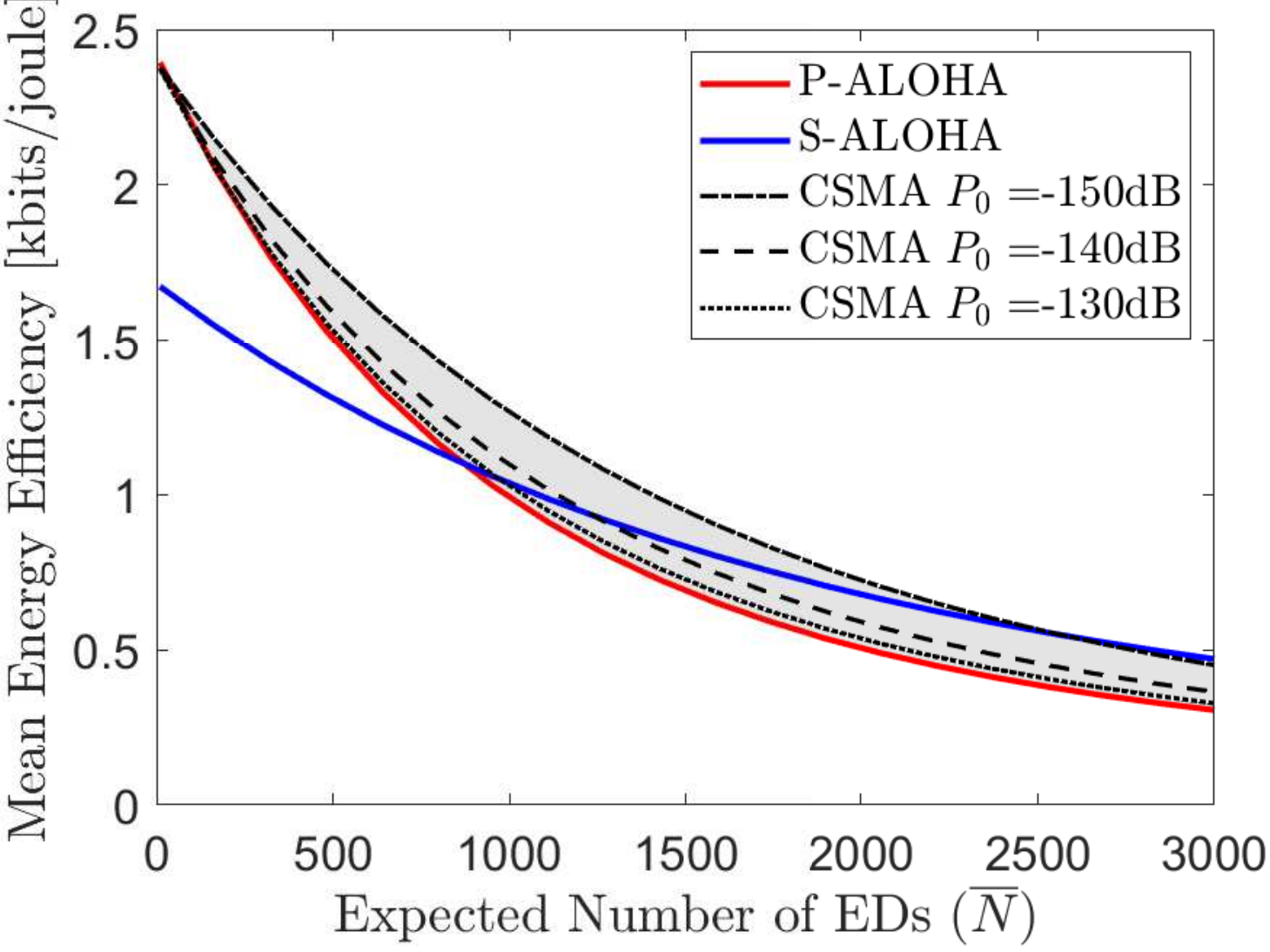}
     }
     \caption{Energy efficiency of \mbox{P-ALOHA}, \mbox{S-ALOHA} and \mbox{NP-CSMA}, \mbox{$R=3$~km}.}
     \vspace{-15pt}
     \label{fig:EE}
\end{figure}
Fig.~\ref{subfig:EEa} shows the energy efficiency of an ED at different distances from the gateway based on \eqref{eq:eta}.
\mbox{S-ALOHA} has a higher energy efficiency compared to \mbox{P-ALOHA}, especially for the EDs located closer to the outer radius of their annulus. For the EDs closest to the gateway, where the success probability of \mbox{P-ALOHA} and \mbox{S-ALOHA} are similar (cf. Fig.~\ref{fig:p_succ_1}), \mbox{P-ALOHA} has a higher energy efficiency than \mbox{S-ALOHA}. In CSMA the energy efficiency depends on the sensing threshold $P_0$, in particular, for small value of $P_0$, the energy efficiency of CSMA is higher than both \mbox{P-ALOHA} and \mbox{S-ALOHA}, remaining almost constant within each annulus.

Fig.~\ref{subfig:EEc} shows the mean energy efficiency for all EDs. At a low density of EDs, \mbox{P-ALOHA} and NP-CSMA offer the best energy efficiency. The energy efficiency of \mbox{S-ALOHA} is reduced by the energy required to maintain the synchronization. When the number of EDs increases, the higher success probability of \mbox{S-ALOHA} more than compensate for the energy used by the synchronization mechanism, making \mbox{S-ALOHA} the most energy-efficient.

To highlight how the proposed model can be used to select the optimal (maximizing energy efficiency) medium access mechanisms, Fig.~\ref{fig:EE_design} shows the operating region of SF~7, 10 and 12 for the parameter space (number of EDs) $\times$ (R). It can be observed that at small R where the hidden terminals are less or at low device density where the number of interferers is small, \mbox{NP-CSMA} surpasses S-ALOHA in energy efficiency. There are regions of the parameter space in which EDs with smaller SFs and hence with a smaller average distance should use \mbox{S-ALOHA}, whereas EDs with larger SFs should use CSMA. This counter-intuitive result is because the energy consumption of the synchronization mechanism via periodic beacons increases exponentially with the SF, making \mbox{S-ALOHA} less energy efficient for large SFs.
\begin{figure}[!t]
\centering
       \includegraphics[width=0.95\columnwidth]{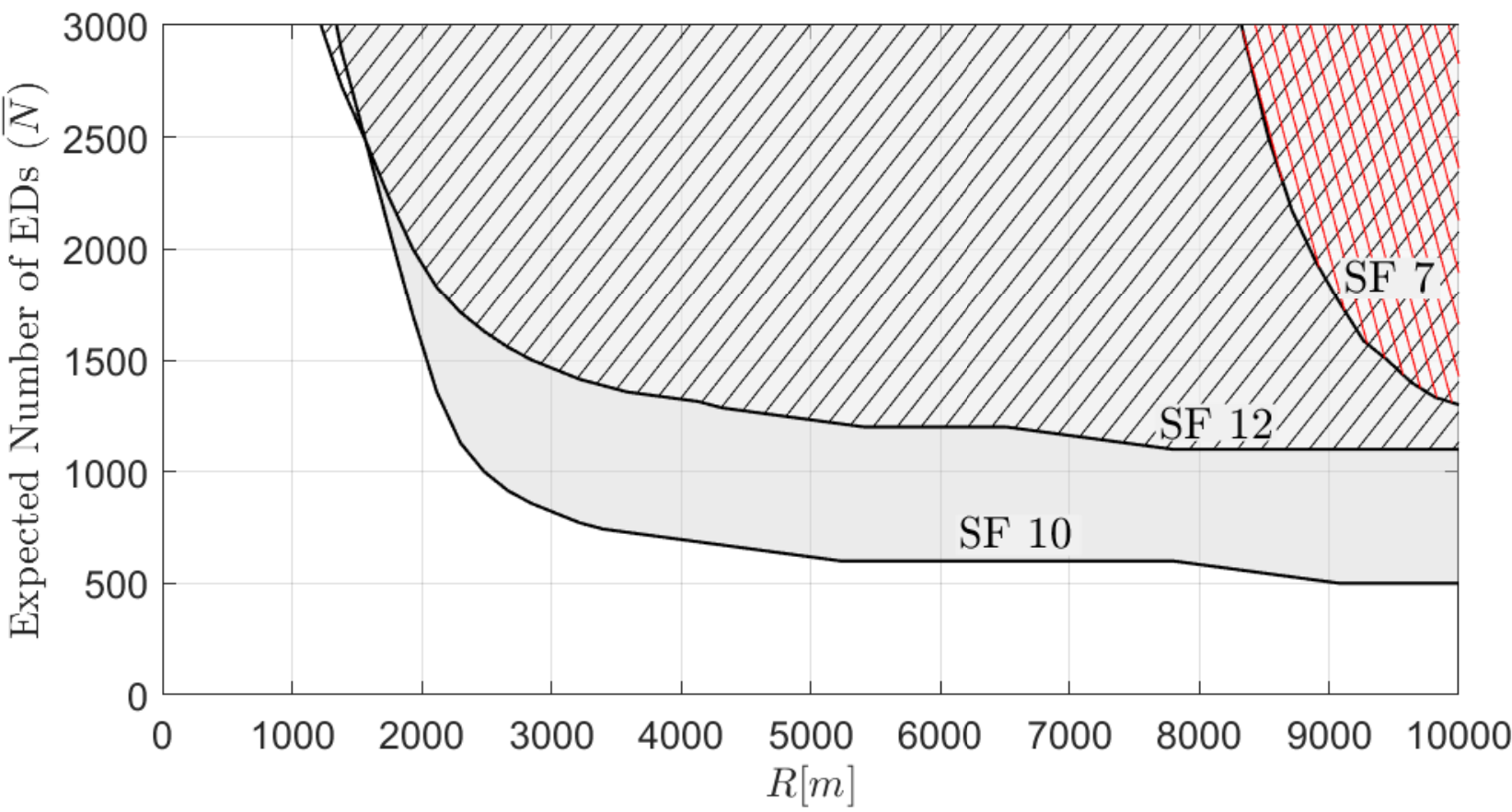}
     \caption{CSMA or S-ALOHA? Optimal operating region in parameter space (SF, $R$ and $\bar{N}$) to maximize the energy efficiency. The area with unique shaded/pattern fill suggests the use of S-ALOHA for an SF, and \mbox{NP-CSMA} otherwise. Regions obtained at $P_0=-140$ dB.}
     \label{fig:EE_design}
\end{figure}

\subsection{Channel Throughput}
The channel throughput obtained from \eqref{eq:Th} for SF~8  and 12 is shown in Fig.~\ref{fig:Tha}  for $R=3$ km.
The small SIR threshold required for the power-capture to take place means that the channel access efficiency of \mbox{P-ALOHA} and \mbox{S-ALOHA} remains high even for a large number of EDs whereas it decreases more rapidly for the theoretical expressions derived without time-power capture effects ($Ge^{-2G}$ and $Ge^{-G}$, respectively).  When multiple messages overlap in time, it is likely for the strongest of these messages to survive the collision. At this point, it is necessary to recall from Sec.~\ref{sec:int_model} that the model uses the dominant interference assuming that no two interferers transmit the same symbol. 
\begin{comment}
\begin{figure*}[!ht]
     \subfloat[SF 7 \label{subfig:Th_7a}]{%
       \includegraphics[width=0.325\textwidth]{Th_7a.pdf}
     }
     \hfill
     \subfloat[SF 10\label{subfig:Th_10a}]{%
       \includegraphics[width=0.325\textwidth]{Th_10a.pdf}
     }
     \hfill
     \subfloat[SF 12 \label{subfig:Th_12a}]{%
       \includegraphics[width=0.325\textwidth]{Th_12a.pdf}
     }
     \caption{Channel Throughput (3 km)}
     \label{fig:Tha}
         \vspace{-10pt}
\end{figure*}
\end{comment}
\begin{figure}[!htp]
\centering
       \includegraphics[width=0.9\columnwidth]{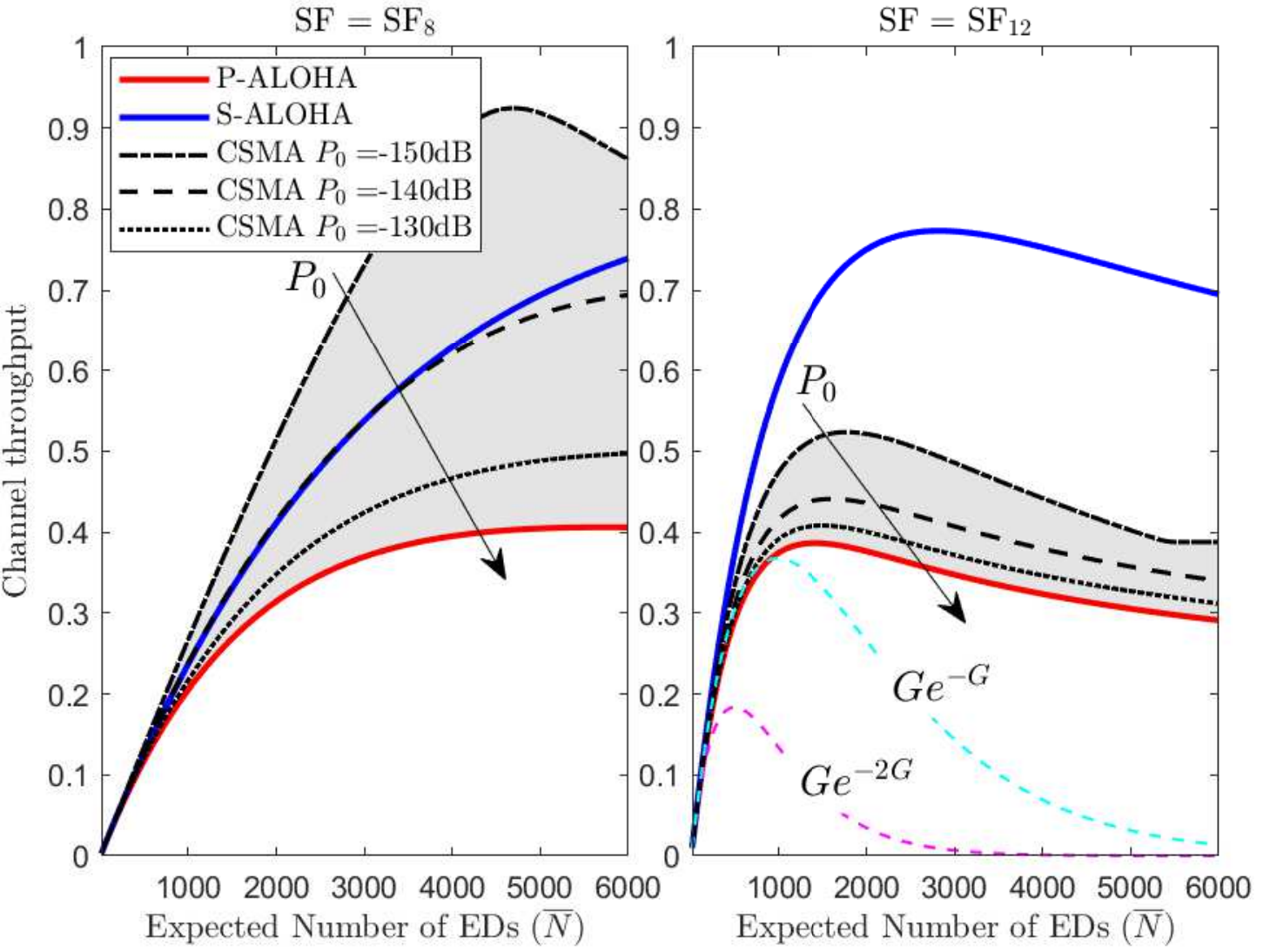}
     \caption{Channel throughput of the studied access mechanisms (the curves for $Ge^{-2G}$ and $Ge^{-G}$ are without capture effect), $R=3$ km}
     \vspace{-5pt}
     \label{fig:Tha}
\end{figure}
\section{Conclusions}
\label{sec:conclusions} 
For a LoRa network, we presented an analytical model and derived metrics for a unified and fair comparison of potential medium access choices (pure ALOHA, slotted ALOHA, and non-persistent CSMA). The model is extensive in the sense that it includes a) an interference model derived from recent experimental and numerical observations in the literature on time- and power-capture offered by LoRa, b) the guard interval and synchronization error for slotted ALOHA, and c) the interference intensity for a selected sensing threshold in CSMA. Our results revealed several assisting guidelines on the design and selection of a medium access solution within LoRa's parameter space: device density, service area, and spreading factor.
The highlight from the results are:
i)	the power-capture effect exhibited by LoRa significantly improves the channel throughput of all the access mechanisms at large device densities.
ii)	the possibility for LoRa messages to survive partial collisions of a message reduces the message vulnerability time and the required guard interval in slotted ALOHA.
iii) slotted ALOHA offers higher reliability than pure ALOHA, at the expense of energy efficiency if the number of devices is small. 
iv)	CSMA can provide better reliability and energy efficiency than slotted ALOHA, but only if the devices are closely located, as is typically the case for small SFs.
In the future, we plan to investigate the delay performance of the different access mechanisms to analyze their suitability in the presence of retransmissions for delay-sensitive industrial applications.

\label{section_conslusions}
\appendix
\subsection{Neighbors in an interfering ED contention domain}
\label{sec:Appendix}
The expected number of active neighbors of a transmitting ED is calculated by using the distance distribution of the EDs. However, a similar approach cannot be used for the interfering node, as the distance distribution of the EDs conditioned on having a transmission in the annulus becomes too complex. The proposed approximation for $E\left[n_A^*\right]$ is presented hereafter.
Let $H$ be the fraction of EDs in an annulus within the contention domain of a transmitter. The value of $H$ can be immediately found from (\ref{eq:H}). %According to the fraction of ED that falls in the transmitter contention domain, we can have $H\approx 0$, $H\approx 0.5$, $H\approx 1$.

When $H\approx 0$ (Fig.~\ref{subfig:appendixa}), the EDs outside of the transmitter contention domain are the majority. In this case, the distance distribution of the ED in the annulus before and after removing the transmitter neighbor will be similar and we can approximate $E\left[n_A^*\right]\approx E\left[n_A\right]$.
For $H\approx 1$ (Fig.~\ref{subfig:appendixb}), the EDs outside of the transmitter contention domain are a small fraction of the total number of EDs in the annulus. In this case, we can assume that all the EDs outside of the transmitter contention domain are in the contention domain of a single interfering device,  $E\left[n_A^*\right]\approx p\pi \lambda \left(r_2^2-r_1^2\right)- E\left[n_A\right]$.
For $H\approx 0.5$ (Fig.~\ref{subfig:appendixc}), the EDs outside and inside of the transmitter contention domain are in similar number. In this case, we cannot assume that all the EDs outside of the transmitter contention domain are part of the contention domain of a single interfering device as multiple interfering EDs can be active.  From Fig. \ref{subfig:appendixc}, we can observe that the number of interfering EDs will be two on average and we can approximate  $E\left[n_A^*\right]\approx \frac{1}{2}\left[p\pi \lambda \left(r_2^2-r_1^2\right)- E\left[n_A\right]\right]$.
\begin{figure}[!t]
\vspace{-15pt}
     \subfloat[$H\ll 0.5$ \label{subfig:appendixa}]{%
       \includegraphics[width=0.3\columnwidth]{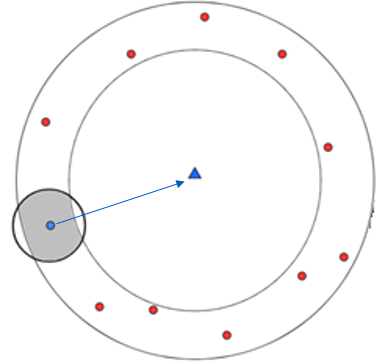}
     }
     \hfill
     \subfloat[$H\approx 0.5$\label{subfig:appendixc}]{%
       \includegraphics[width=0.35\columnwidth]{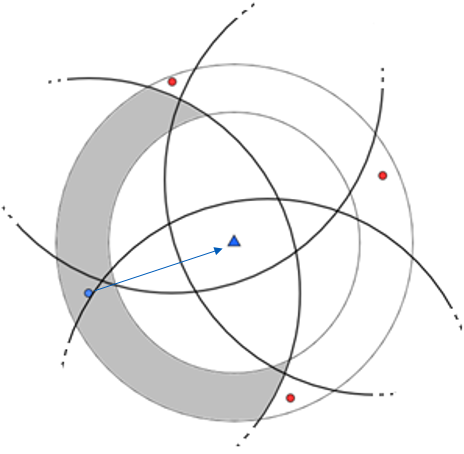}
     }
     \hfill
     \subfloat[$H\gg 0.5$\label{subfig:appendixb}]{%
       \includegraphics[width=0.29\columnwidth]{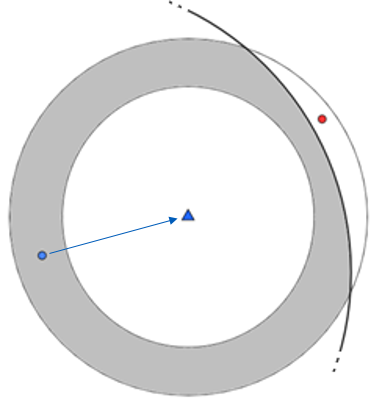}
     }
     \caption{Representation of the fraction of  EDs located  inside  the  transmitter contention domain}
     \vspace{-15pt}
     \label{fig:appendix}
\end{figure}

We introduce an exponential function $g(H)$ (Fig. \ref{subfig:g_fun}) such that $g(H)=1$ for $H \in \{0,1\}$ and $g(H)=2$ for $H=0.5$ and we use it to scale the upper bound to  $E\left[n_A^*\right]$ found in \eqref{eq:bound}.
\begin{equation}
\displaystyle
\label{eq:g(H)}
	g(H)=1+\exp\left(1-\frac{1}{4H-4H^2}\right)\text{.}
\end{equation}
Fig. \ref{subfig:g_fun_val} compares the number of interfering EDs obtained by simulations and the proposed mathematical model.
\begin{figure}[!t]
     \subfloat[\label{subfig:g_fun}]{%
       \includegraphics[width=0.48\columnwidth]{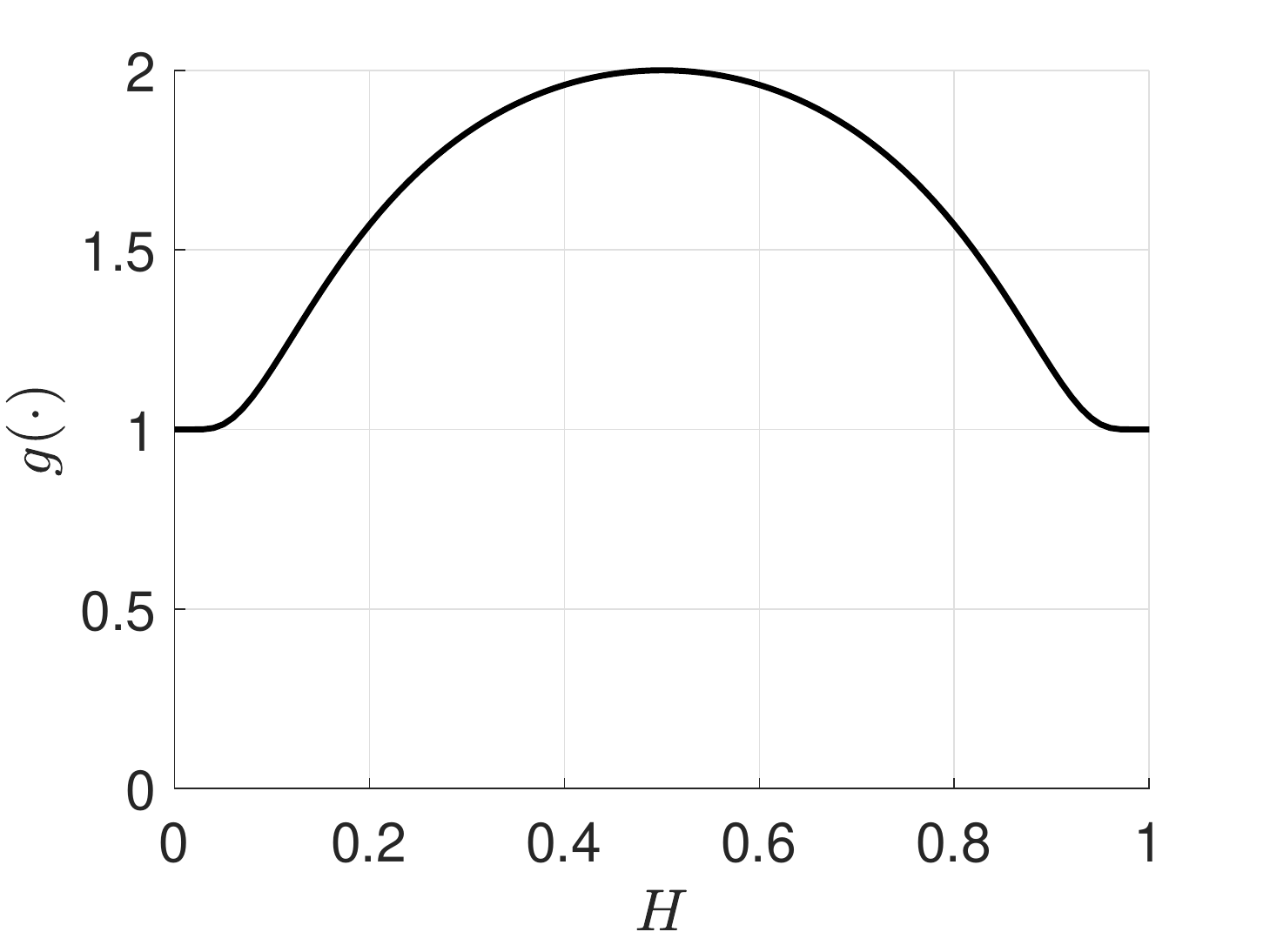}
     }
     \hfill
     \subfloat[\label{subfig:g_fun_val}]{%
       \includegraphics[width=0.48\columnwidth]{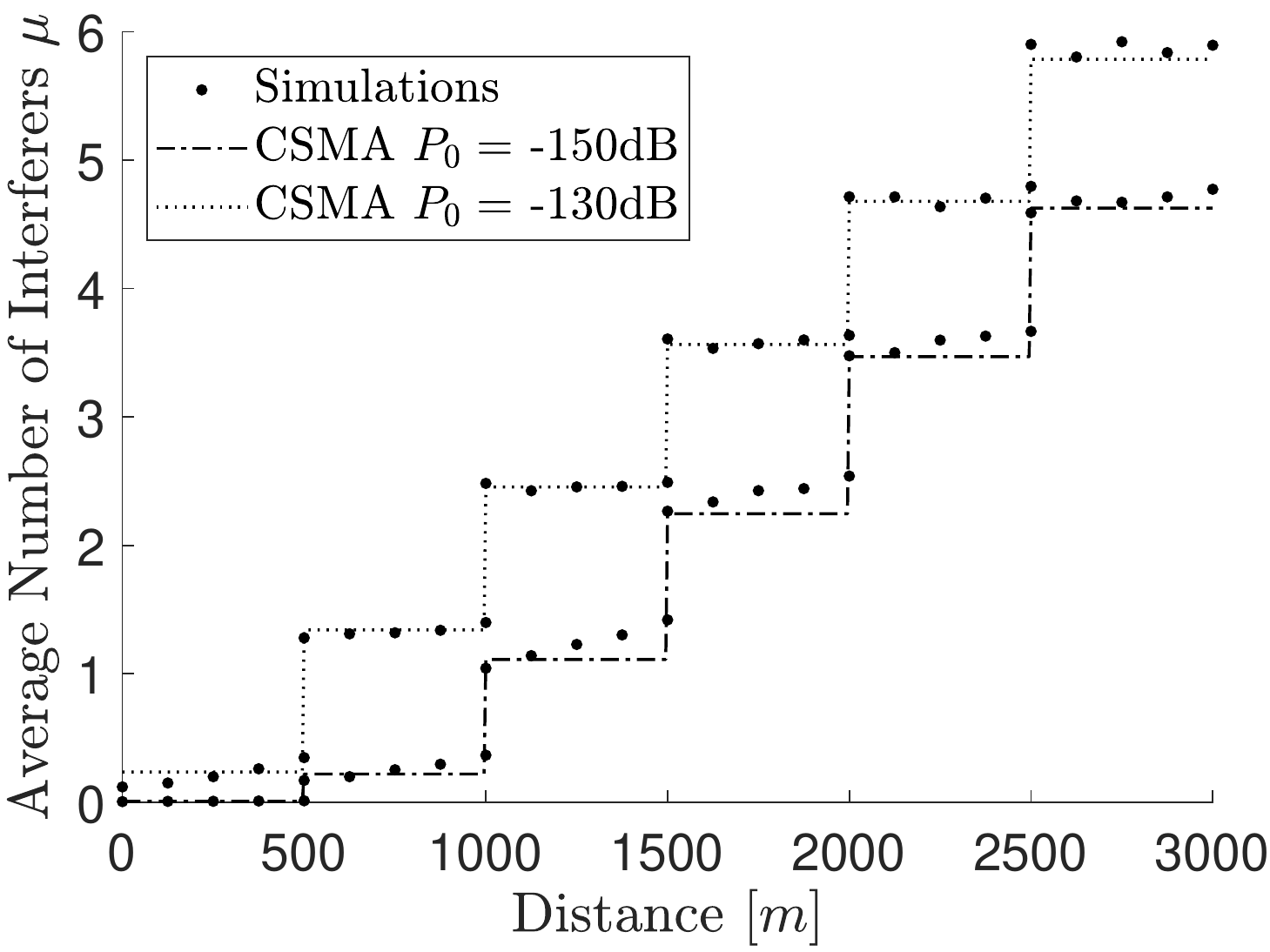}
     }
     \caption{Scaling function for the interfering ED and number of interferers}
     \vspace{-15pt}
     \label{fig:appendix2}

\end{figure}

% ---- Bibliography ----
%\IEEEtriggeratref{13}
\bibliographystyle{IEEEtran}
\bibliography{refs}

% that's all folks
\end{document}